\begin{document}

\theoremstyle{plain}
\newtheorem{thm}{Theorem}
\newtheorem{lem}[thm]{Lemma}
\newtheorem{prop}[thm]{Proposition}
\newtheorem{corr}{Corollary}
\theoremstyle{definition}
\newtheorem{defn}{Definition}
\newtheorem{conj}{Conjecture}
\newtheorem{exmp}{Example}
\theoremstyle{remark}
\newtheorem*{rem}{Remark}
\newtheorem*{note}{Note}
\newtheorem{case}{Case}
\algnewcommand\algorithmicforeach{\textbf{for each}}
\algdef{S}[FOR]{ForEach}[1]{\algorithmicforeach\ #1\ \algorithmicdo}

\title{Edge N-Level Sparse Visibility Graphs: Fast Optimal Any-Angle Pathfinding Using Hierarchical Taut Paths}
\author{
    Shunhao Oh \and Hon Wai Leong\\
    Department of Computer Science\\
    National University of Singapore\\
    ohoh@u.nus.edu, leonghw@comp.nus.edu.sg
}
\maketitle
\begin{abstract}
In the Any-Angle Pathfinding problem, the goal is to find the shortest path between a pair of vertices on a uniform square grid, that is not constrained to any fixed number of possible directions over the grid. Visibility Graphs are a known optimal algorithm for solving the problem with the use of pre-processing. However, Visibility Graphs are known to perform poorly in terms of running time, especially on large, complex maps. In this paper, we introduce two improvements over the Visibility Graph Algorithm to compute optimal paths. Sparse Visibility Graphs (SVGs) are constructed by pruning unnecessary edges from the original Visibility Graph. Edge N-Level Sparse Visibility Graphs (ENLSVGs) is a hierarchical SVG built by iteratively pruning non-taut paths. We also introduce Line-of-Sight Scans, a faster algorithm for building Visibility Graphs over a grid. SVGs run much faster than Visibility Graphs by reducing the average vertex degree. ENLSVGs, a hierarchical algorithm, improves this further, especially on larger maps. On large maps, with the use of pre-processing, these algorithms are orders of magnitude faster than existing algorithms like Visibility Graphs and Theta*.
\end{abstract}

\section{Introduction}
In many pathfinding applications involving open spaces, it is common strategy to abstract a 2D map into a uniform square grid \cite{comprehensive_study}. Many grid-based pathfinding algorithms are 8-directional, where the agent can only move in the four cardinal and four diagonal directions along the grid. We consider the Any-Angle Pathfinding problem, where this constraint is removed. The start and goal are vertices of the grid, The objective is to compute shortest path in terms of euclidean length, from the start to the goal, that does not intersect any blocked tiles in the grid.

There are many optimal algorithms for 8-directional pathfinding, like a simple 8-directional A*, or faster algorithms like Jump-Point Search \cite{harabor_jps} and Subgoal Graphs \cite{subgoal_graphs}. On the other hand, computing optimal any-angle paths is more difficult. Thus, many existing Any-Angle Pathfinding algorithms like Theta* \cite{nash_thetastar} and Block A* \cite{block_astar}, are heuristic in nature.

A known optimal Any-Angle Pathfinding algorithm is A* on Visibility Graphs \cite{visibilitygraphs}. However, Visibility Graphs can be inefficient in practice for two reasons. Firstly, Visibility Graph construction requires many Line-of-Sight Checks, quadratic on number of tiles in the grid. While this can be partially solved by pre-processing the visibility graph, a second issue is the high average vertex degree, slowing down an A* search on the graph.

Another algorithm, Anya \cite{harabor_anya} has been shown to compute optimal paths efficiently, with comparable speeds to heuristic algorithms like Theta* \cite{harabor_anya_full}. It also has the advantage of being an online algorithm, requiring no pre-processing of the map.

In this paper, we introduce two improvements to the Visibility Graph algorithm, Sparse Visibility Graphs (SVGs) and Edge N-Level Sparse Visibility Graphs (ENLSVGs), which are orders of magnitude faster than existing algorithms Theta* and Visibility Graphs. The relationship to other algorithms can be found in \cite{comparison_paper}. SVGs are constructed from removing unnecessary edges from the Visibility Graph. ENLSVGs are constructed by building a hierarchy over an underlying SVG.

The SVG and ENLSVG algorithms are fast and optimal, but are offline algorithms, using a slower pre-computation step so that many shortest path queries can be made quickly. A drawback of offline algorithms is that the pre-computation step needs to be repeated each time the map changes.

Both algorithms, SVGs and ENLSVGs, are based only on the simple concept of pruning non-taut paths to reduce the search space. Through these algorithms, we observe the relationship between taut and optimal paths. Optimal paths are difficult to compute in general, but taut paths, being locally optimal rather than globally optimal, can be computed very easily in constant time. We show how just simple taut path restrictions can greatly reduce the search space for an optimal search. Pruning taut paths on one level forms SVGs, and extending it to n levels of pruning forms ENLSVGs.

Previous work making use of taut paths in Any-Angle search include Anya \cite{harabor_anya} and Strict Theta* \cite{oh_stricttheta}. The idea of building a multi-level hierarchy for optimal pathfinding is based on previous work on N-Level Subgoal Graphs \cite{aa_subgoal_graphs}. N-Level Subgoal Graphs prune vertices using shortest paths, while ENLSVGs prune edges using taut paths.

We also introduce Line-of-Sight Scans, a fast algorithm for querying visible neighbours of a vertex in the grid. This replaces Line-of-Sight Checks for building the Visibility Graph and inserting the start and goal points into the graph.



\section{Preliminaries}

As previously mentioned, A* on Visibility Graphs (VGs) returns optimal any-angle paths. The vertices of a Visibility Graph consists of the start and goal vertices, and all convex corners of obstacles. We connect all pairs of vertices with Line-of-Sight. Visibility Graph construction is a slow process as it requires Line-of-Sight Checks between every pair of vertices. Thus, it is more reasonable to pre-process a Visibility Graph on the grid, and reuse the graph for multiple shortest-path queries.

As start and goal vertices differ for each query, we leave them out of the pre-processed graph. During a shortest-path query, we connect the start and goal vertices to the Visibility Graph by doing Line-of-Sight Checks to all other existing vertices. We remove them after the query.

We use taut path restrictions to reduce the search space. Informally, a taut path is a path which, when treated as a string, cannot be made ``tighter" by pulling on its ends. Formally and practically, a path is taut if and only if every heading change in the path wraps tightly around some obstacle \cite{oh_stricttheta}. As shown in Figure \ref{fig:tautpaths}, only a single obstacle needs to be checked per heading change to determine if a path is taut (Figure \ref{fig:tautpaths3} can never be taut). As all optimal paths are taut, if we restrict the search space to taut paths, the optimal path will be included in the search space.

\begin{figure}[!h]
  \centering
    \begin{subfigure}[b]{.33\linewidth}
      \centering
      \includegraphics[width=.8\linewidth]{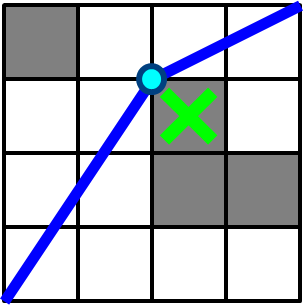}
      \caption{}
      \label{fig:tautpaths1}
    \end{subfigure}%
    \begin{subfigure}[b]{.33\linewidth}
      \centering
      \includegraphics[width=.8\linewidth]{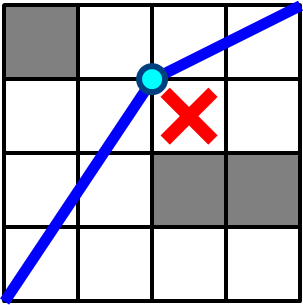}
      \caption{}
      \label{fig:tautpaths2}
    \end{subfigure}%
    \begin{subfigure}[b]{.33\linewidth}
      \centering
      \includegraphics[width=.8\linewidth]{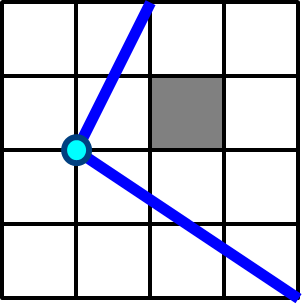}
      \caption{}
      \label{fig:tautpaths3}
    \end{subfigure}%
  \caption{(a) is taut, while (b) and (c) are not.}
  \label{fig:tautpaths}
\end{figure}

For the rest of this paper, we make use of Taut A* for graph search in place of the standard A* algorithm. In Taut A*, whenever we attempt to generate a successor $v$ from the current state $u$, we first check for tautness. $v$ can be a successor of $u$ only if the subpath $parent(u)-u-v$ is taut.

\section{Sparse Visibility Graphs}

Many edges in the Visibility Graph are unnecessary, as they are never used in the final path found by A*. In particular, these edges cannot be part of any taut path between any pair of start or goal points, unless the start or goal is one of the edge's endpoints, in which case the edge will be added anyway when connecting the start or goal to the visibility graph.

Refer to the edge $uv$ in Figure \ref{fig:onetautexit}. Suppose endpoint $v$ is neither the start nor the goal. Thus, in any path involving edge $uv$, the path must leave $v$ to move to another vertex. However, in all legal directions the path can leave $v$ from, the path will not be taut, and thus not optimal. We say that the edge $uv$ has no taut exit from $v$.

\begin{figure}[!h]
  \centering
    \begin{subfigure}[b]{.5\linewidth}
      \centering
      \includegraphics[width=.9\linewidth]{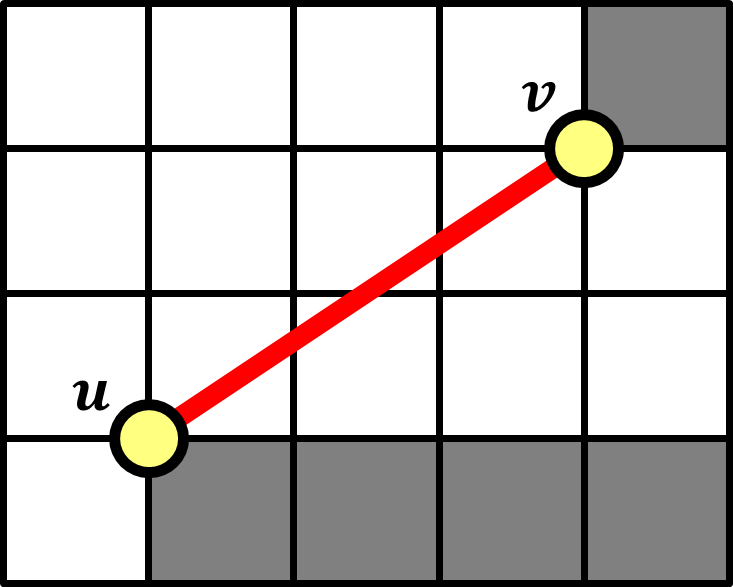}
      \caption{Edge with no taut exit from $v$}
      \label{fig:onetautexit}
    \end{subfigure}%
    \begin{subfigure}[b]{.5\linewidth}
      \centering
      \includegraphics[width=.9\linewidth]{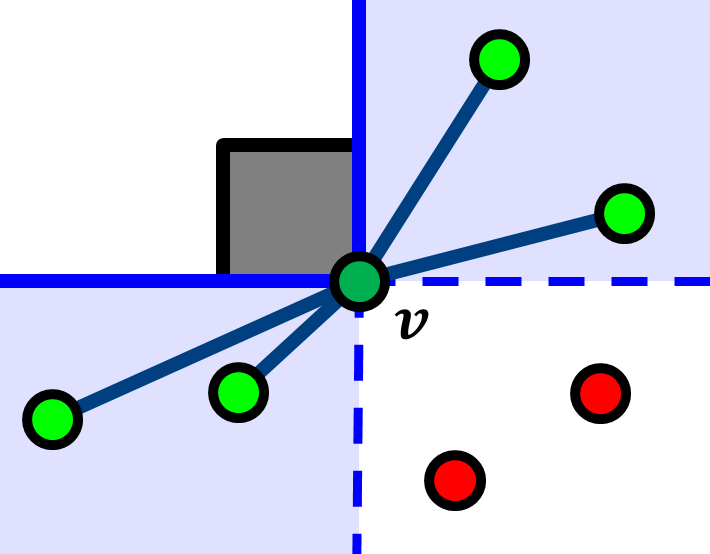}
      \caption{Taut regions (blue) around $v$}
      \label{fig:tautregions}
    \end{subfigure}%
  \caption{Edges without taut exit directions are unnecessary.}
  \label{fig:svg_tautexits}
\end{figure}

To identify the edges to be pruned, we consider the taut regions around each vertex $v$ in the graph. A vertex $u$ is in the taut region of vertex $v$ if the edge $uv$ has a taut exit from $v$. To find the taut regions, we need only consider the obstacles adjacent to $v$ as shown in Figure \ref{fig:tautregions}. We prune any edge $uv$ where any one of the endpoints does not lie within the taut region of the other endpoint (Figure \ref{fig:onlyoneendpointtaut}). The remaining edges make up the Sparse Visibility Graph (SVG).

\begin{figure}[!h]
  \centering
  \includegraphics[width=.65\linewidth]{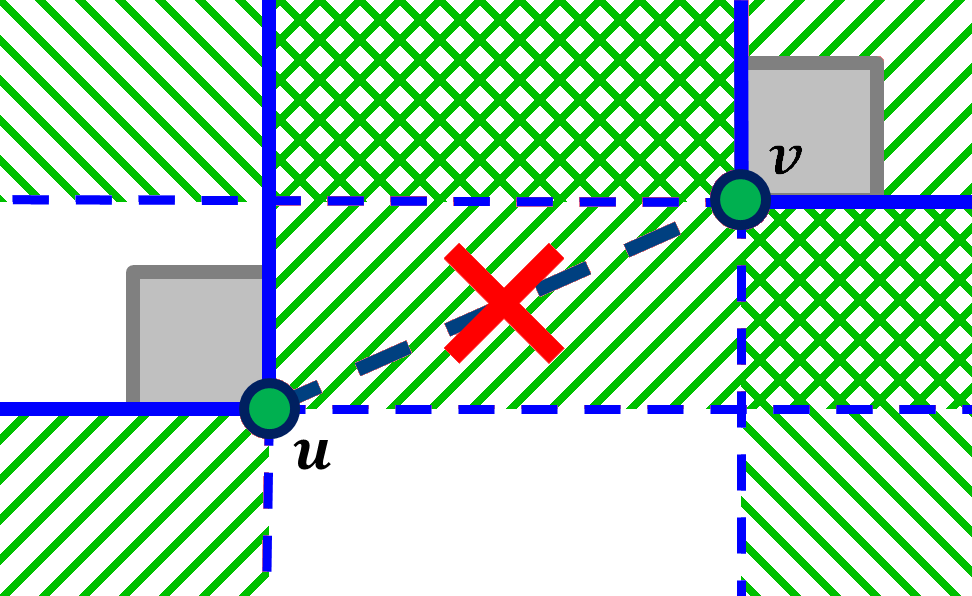}
  \caption{$v$ lies within the taut region of $u$, but $u$ does not lie within the taut region of $v$. Edge $uv$ is pruned.}
  \label{fig:onlyoneendpointtaut}
\end{figure}

\subsection{Collinear Points}

We describe our policy on collinear points in an SVG. Naively, a set of $k$ collinear points would form a size $k$ clique due to Line-of-Sight between any two points in the set (Figure \ref{fig:collinearpoints}). This is clearly wasteful and unnecessarily increases the average vertex degree. In these cases, it suffices for each vertex to have edges only to its closest neighbour on each side of the point on the line. Intuitively, we can imagine each vertex as being an epsilon-size Line-of-Sight obstruction.

\begin{figure}[!h]
  \centering
    \includegraphics[width=.65\linewidth]{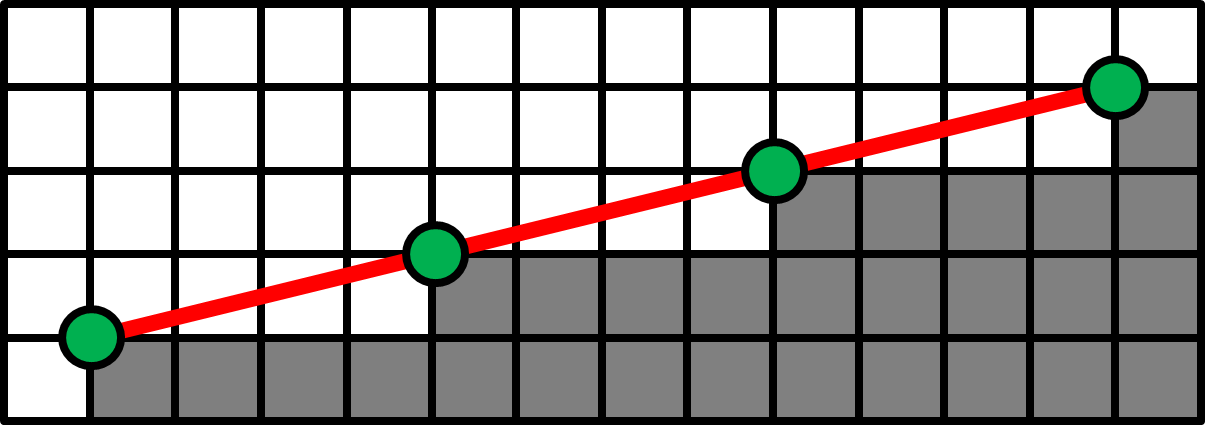}
  \caption{Every pair of points on the line has Line-of-Sight.}
  \label{fig:collinearpoints}
\end{figure}

\subsection{Fast Construction Using Line-of-Sight Scans}

Constructing Visibility Graphs using Line-of-Sight check between every pair of vertices takes $\Theta(V^2)$ Line-of-Sight Checks even in the best case. This is because the computation is non-local. Even on dense maps where Line-of-Sight is uncommon, Line-of-Sight Checks are still conducted between vertices on opposite ends of the map.

In place of individual vertex-to-vertex Line-of-Sight Checks, we introduce Line-of-Sight Scans, which computes the set of visible vertices from a single vertex. Intuitively, a Line-of-Sight Scan from a vertex $v$ is a radial outwards scan which breaks whenever it hits an obstacle. We implement this using a similar method to the interval search used by Anya \cite{harabor_anya}.

The key advantage of Line-of-Sight Scans is that it is local. For each vertex, the running time of Line-of-Sight Scans depends on the number of visible vertices, while Line-of-Sight Checks depends on the total number of vertices in the entire map.  Line-of-Sight Scans is much faster, especially on larger maps with a low likelihood of cross-map visibility.

\begin{figure}[!h]
  \centering
    \begin{subfigure}[b]{.5\linewidth}
      \centering
      \frame{\includegraphics[width=.95\linewidth]{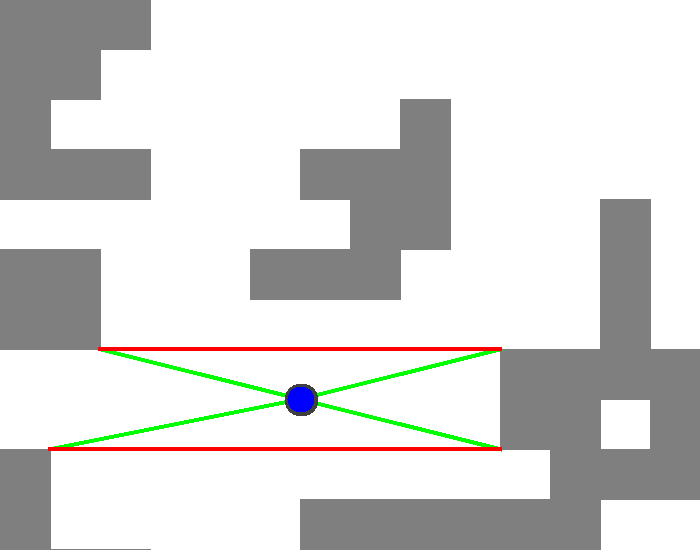}}
      \caption{Initial intervals}
      \label{fig:alldirlos_initial}
    \end{subfigure}%
    \begin{subfigure}[b]{.5\linewidth}
      \centering
      \frame{\includegraphics[width=.95\linewidth]{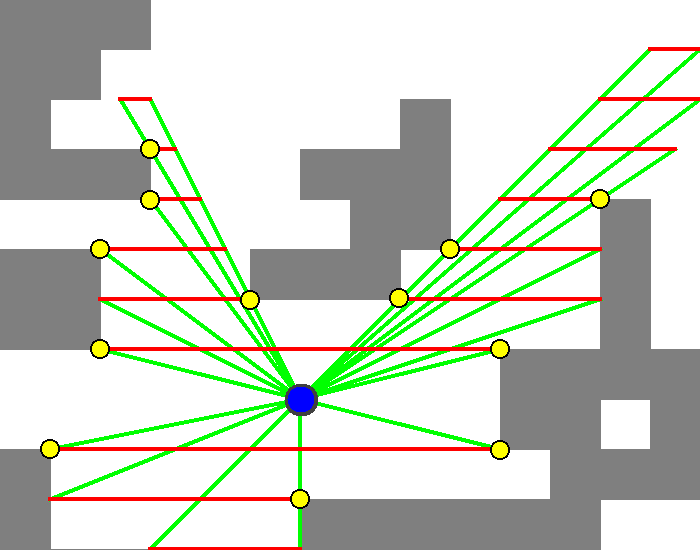}}
      \caption{Resulting search tree}
      \label{fig:alldirlos_tree}
    \end{subfigure}%
  \caption{An All-Direction Line-of-Sight scan. Intervals are the red horizontal lines. The found visible successors are marked in yellow.}
  \label{fig:alldirlos}
\end{figure}

We initialise the scan around a point (the source) by generating horizontal intervals around it as shown in Figure \ref{fig:alldirlos_initial}. Each interval is a tuple $(x_L,x_R,y)$ consisting of an integer $y$-coordinate and two fractional endpoints on the $x$-axis. The successors of an interval are the observable successors defined in \cite{harabor_anya}, which are computed by projecting the current interval onto the next $y$-coordinate away from the source. Obstacles split up generated intervals.

\begin{figure}[!h]
  \centering
  \includegraphics[width=.6\linewidth]{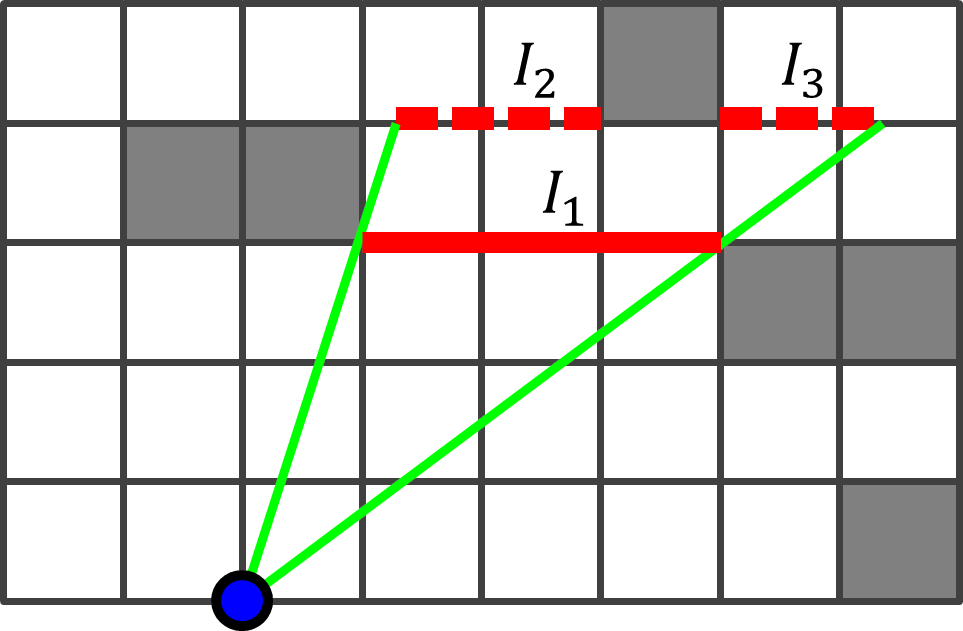}
  \caption{Successor intervals $I_2$ and $I_3$ generated from $I_1$.}
  \label{fig:observable}
\end{figure}

From there, we conduct a depth-first search over the intervals, with each interval generating its observable successors, forming the search tree in Figure \ref{fig:alldirlos_tree}. As visibility graph vertices only occur at the endpoints of the intervals, it suffices to check the interval endpoints to obtain the list of visible successors. We call this an All-Direction Line-of-Sight Scan.

In a Sparse Visibility Graph, we only add edges to vertices in taut regions (Figure \ref{fig:onetautexit}), which are determined by the current vertex's adjacent obstacles. Figure \ref{fig:sixtautcases} illustrates the six different cases. Thus, for each vertex we need only scan within the taut regions. We do this by simply changing the initial states of the search as shown in Figure \ref{fig:tautdirlos}. We call this a Taut-Direction Line-of-Sight Scan.

\begin{figure}[!h]
  \centering
    \includegraphics[width=\linewidth]{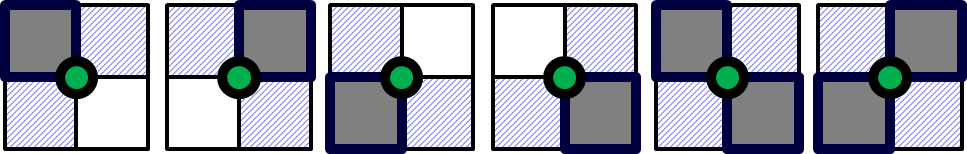}
  \caption{The six different obstacle (grey) configurations that determine taut regions (in blue).}
  \label{fig:sixtautcases}
\end{figure}

\begin{figure}[!h]
  \centering
    \begin{subfigure}[b]{.5\linewidth}
      \centering
      \frame{\includegraphics[width=.95\linewidth]{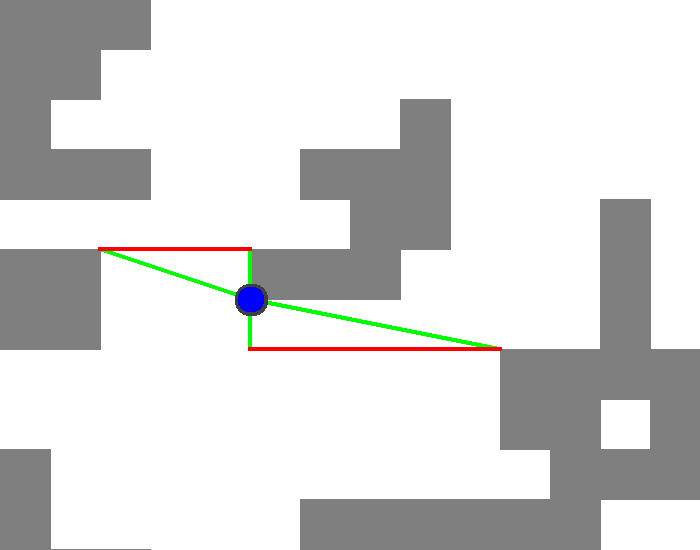}}
      \caption{Initial intervals}
      \label{fig:tautdirlos_initial}
    \end{subfigure}%
    \begin{subfigure}[b]{.5\linewidth}
      \centering
      \frame{\includegraphics[width=.95\linewidth]{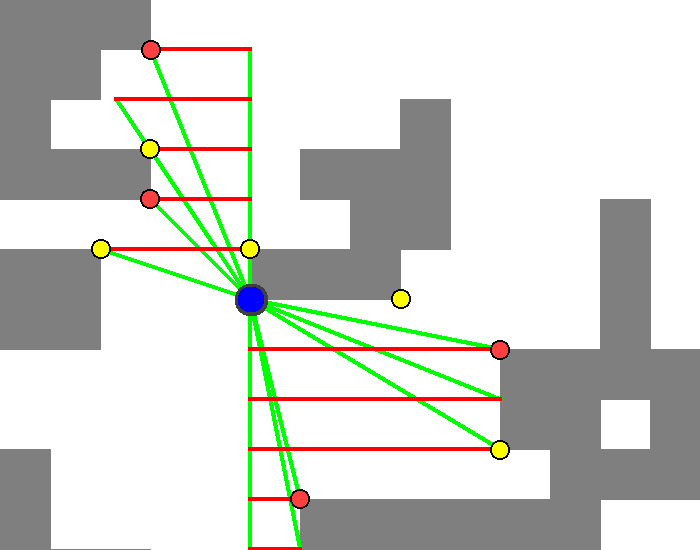}}
      \caption{Resulting search tree}
      \label{fig:tautdirlos_tree}
    \end{subfigure}%
  \caption{A Taut-Direction Line-of-Sight scan. The found visible neighbours coloured red in (b) are also pruned as they do not meet the condition shown in Figure \ref{fig:onlyoneendpointtaut}.}
  \label{fig:tautdirlos}
\end{figure}

Line-of-Sight Scans can be sped up by pre-computing left and right extents for each grid vertex - the number of tiles one can traverse in that direction before hitting an obstacle, as shown in implementation of Anya in \cite{comparison_paper}. This pre-computation also improves the runtime speed of the algorithm, as All-Direction Line-of-Sight Scans are used to insert start and end points into the graph.

\subsection{Properties of Sparse Visibility Graphs}

The Sparse Visibility Graph Algorithm simply uses Taut A* over a pre-processed SVG. SVGs reduce the average vertex degree with no cost to optimality. On randomly generated maps with percentages of blocked tiles ranging between $6\%$ and $40\%$, the average vertex degree of VGs remains approximately $2.5$ times that of SVGs. We see this in Figure \ref{fig:averagedegreechart}.

\begin{figure}[!h]
  \centering
    \includegraphics[width=.8\linewidth]{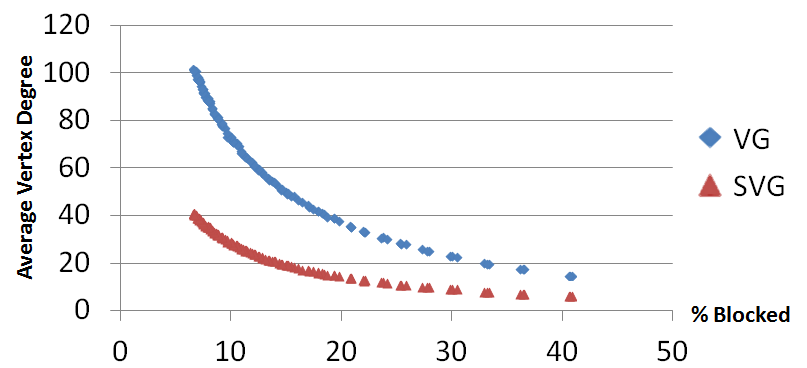}
  \caption{Comparison of average vertex degree on randomly-generated maps of various blocked densities.}
  \label{fig:averagedegreechart}
\end{figure}

Also, from Figure \ref{fig:ebonlakes_searchtrees1}, we can see that the search tree of the Sparse Visibility Graph algorithm is more sparse than that of the original Visibility Graph algorithm.

\begin{figure}[!h]
  \centering
    \begin{subfigure}[b]{.5\linewidth}
      \centering
      \includegraphics[width=.9\linewidth]{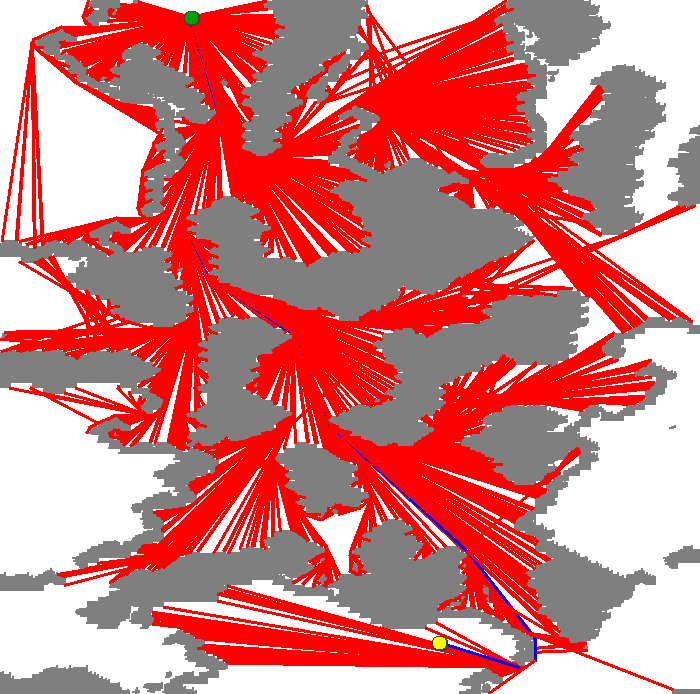}
      \caption{VG}
      \label{fig:ebonlakes_vg}
    \end{subfigure}%
    \begin{subfigure}[b]{.5\linewidth}
      \centering
      \includegraphics[width=.9\linewidth]{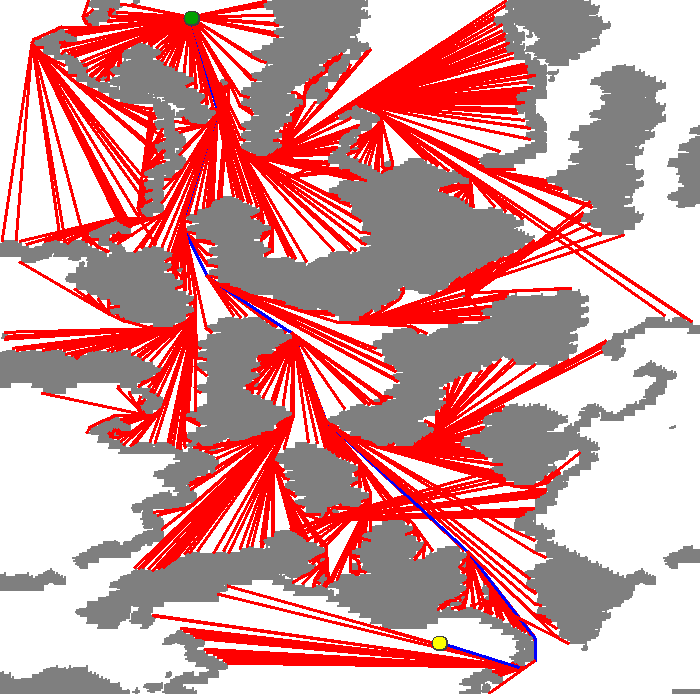}
      \caption{SVG}
      \label{fig:ebonlakes_svg}
    \end{subfigure}
  \caption{Search tree comparson on the map EbonLakes.}
  \label{fig:ebonlakes_searchtrees1}
\end{figure}

A key property of SVGs is that it almost cannot be pruned any further. Theorem \ref{thm:necessary_edges} describes this property:

\begin{thm}
\label{thm:necessary_edges}
For each edge in the Sparse Visibility Graph, there exist two points which has an optimal path that uses that edge, neither of which are the endpoints of the edge.
\end{thm}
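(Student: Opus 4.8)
The plan is to prove the statement constructively: given an arbitrary edge $uv$ in the SVG, I will exhibit a start point $s$ and goal point $t$, distinct from $u$ and $v$, such that the unique optimal $s$--$t$ path is forced to traverse $uv$. First I would recall what survives pruning: by construction $u$ lies in the taut region of $v$ and $v$ lies in the taut region of $u$, so there is at least one legal direction the path can enter $v$ from (through the adjacent obstacle configuration at $v$) that keeps the turn at $v$ taut, and likewise at $u$. The idea is to place $s$ just past $v$ along such a taut exit direction, and $t$ just past $u$ along a corresponding taut exit direction, both at a tiny distance $\varepsilon$, so that the candidate path $s - v - u - t$ is taut at both interior turns.

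Next I would argue optimality of this candidate path. The key geometric lemma is that if we choose $\varepsilon$ small enough, any path from $s$ to $t$ that does \emph{not} go ``directly'' through the corner pair $v$ then $u$ must either wrap around the obstacle on the far side (incurring a detour of length bounded below by a constant depending only on the local obstacle geometry, hence $\gg \varepsilon$) or fail to have line-of-sight and thus be at least as long. Since $s$, $v$, $u$, $t$ are (near-)collinear-ish only in the controlled sense that the turns at $v$ and $u$ are the tautest available, the path $s-v-u-t$ has length $|sv| + |vu| + |ut| = 2\varepsilon + |uv|$, while any competitor is at least $|uv|$ plus a constant detour, so for $\varepsilon$ below that constant the candidate wins. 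I would also need to handle the collinear-points convention: if $u$ and $v$ are consecutive collinear vertices, the ``taut exit'' at each is simply continuing along the line (or turning at the $\varepsilon$-obstruction), and the same argument applies since any detour around the intervening vertices costs a constant.

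The main obstacle I anticipate is making the ``detour is bounded below by a constant'' claim rigorous and uniform. One has to show that for every SVG edge there is a taut exit at each endpoint such that placing $s$ and $t$ slightly beyond forces the path through $uv$; the subtlety is that a taut exit direction at $v$ might, for certain obstacle configurations, ``point back toward'' regions where an alternative short route exists, so the choice of which taut exit to use at $v$ (and at $u$) must be coordinated. I would resolve this by case analysis over the six taut-region configurations of Figure \ref{fig:sixtautcases} at each endpoint, in each case picking the exit direction that hugs the obstacle that certifies tautness of the edge $uv$ itself; this guarantees that the obstacle separating $s$ from $t$ is exactly the one whose corner is at $v$ (and at $u$), so any alternative path must clear that obstacle, giving the constant-size detour. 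The remaining steps — verifying $s,t \neq u,v$ (immediate for $\varepsilon>0$), verifying line-of-sight from $s$ to $v$ and $u$ to $t$ (immediate for small $\varepsilon$ since $v$ and $u$ are SVG vertices with open taut regions), and assembling the length inequality — are routine.
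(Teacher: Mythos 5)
Your overall strategy matches the paper's: classify by the obstacle configurations at the two endpoints, place the start just past $v$ and the goal just past $u$ along taut exit directions that hug the obstacles certifying tautness of $uv$, and argue the resulting taut path through $uv$ beats any competitor. The paper compresses your $6\times 6$ endpoint analysis into four edge types (Figure \ref{fig:necessary_edges}) and exhibits the two points and the necessarily-unblocked tiles explicitly in each case.

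There is, however, a genuine gap in how you place the points. In this problem the start and goal must be \emph{vertices of the grid} (see the Preliminaries), and the theorem is meant to certify that pruning an SVG edge breaks optimality for some actual query; so $s$ and $t$ cannot be placed at distance $\varepsilon$ from $v$ and $u$ for arbitrarily small $\varepsilon$. Once $s$ and $t$ are forced onto grid vertices, $|sv|$ and $|ut|$ are at least $1$, which is the same order of magnitude as the detour a competitor pays for going around the certifying unit-tile obstacles the other way. Your ``choose $\varepsilon$ below the detour constant'' step is therefore unavailable, and the inequality $|sv|+|uv|+|ut| < $ (length of best competitor) must be verified concretely in each configuration --- e.g.\ with $v$ at the origin, the tile $[-1,0]\times[0,1]$ blocked, $s=(-1,0)$ and $u$ to the northeast, one must actually check that routing around the north side of the tile is longer than routing through $v$. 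That per-case verification, together with identifying which tiles are guaranteed unblocked so that the segments $sv$ and $ut$ and the chosen competitor bounds are valid, is precisely the content the paper supplies via its four figures and that your proposal leaves unestablished. (Your limiting argument would prove the statement for non-grid start/goal points, which is a weaker claim than the one the algorithmic conclusion needs.)
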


\begin{proof}
Edges in the SVG each belong to one of the four cases in Figure \ref{fig:necessary_edges}. In each case, the two required points are marked with blue crosses. The tiles that are necessarily unblocked in each case are marked with dotted lines.
\end{proof}

\begin{figure}[!h]
  \centering
    \begin{subfigure}[b]{.35\linewidth}
      \centering
      \includegraphics[width=.9\linewidth]{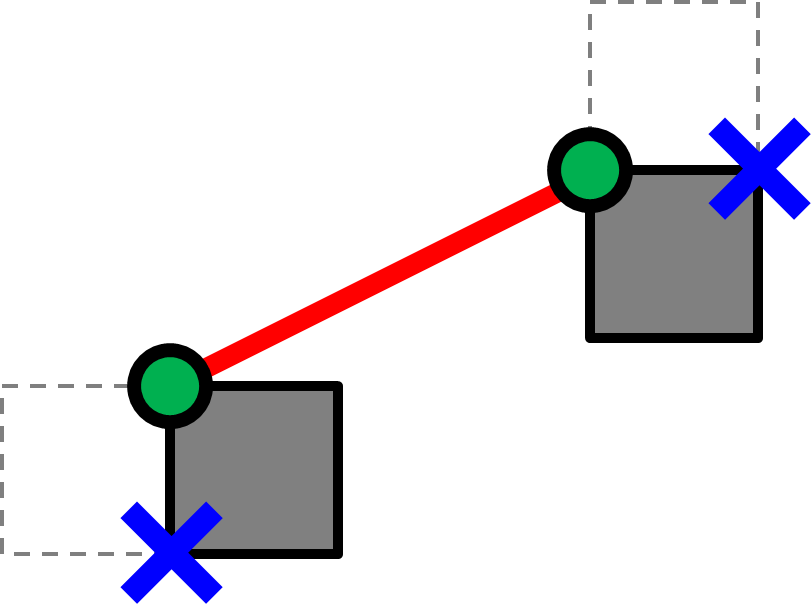}
      \caption{}
      \label{fig:necessary_c1}
    \end{subfigure}~~~~~%
    \begin{subfigure}[b]{.35\linewidth}
      \centering
      \includegraphics[width=.9\linewidth]{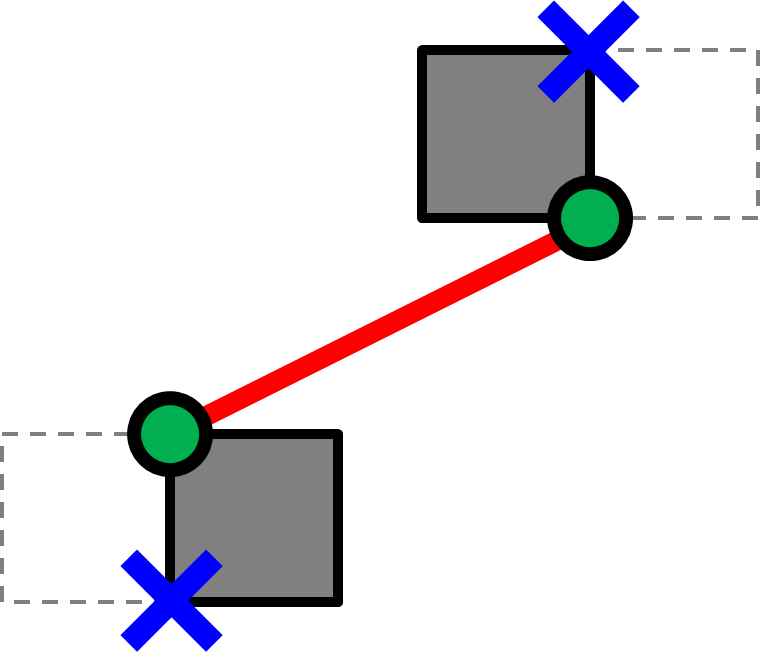}
      \caption{}
      \label{fig:necessary_c2}
    \end{subfigure}
    \begin{subfigure}[b]{.35\linewidth}
      \centering
      \includegraphics[width=.9\linewidth]{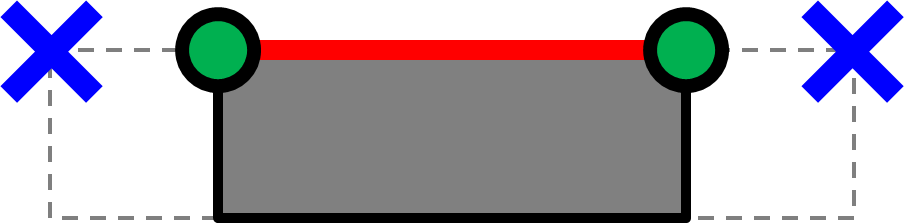}
      \caption{}
      \label{fig:necessary_c3}
    \end{subfigure}~~~~~%
    \begin{subfigure}[b]{.35\linewidth}
      \centering
      \includegraphics[width=.9\linewidth]{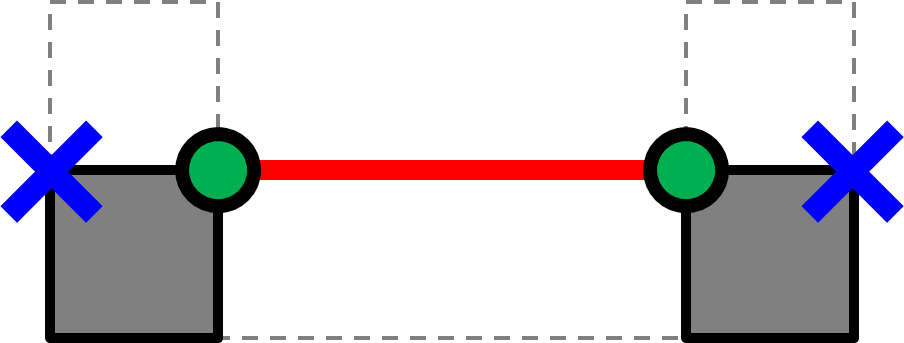}
      \caption{}
      \label{fig:necessary_c4}
    \end{subfigure}%
  \caption{The four possible types of edge in SVGs.}
  \label{fig:necessary_edges}
\end{figure}

We note that Theorem \ref{thm:necessary_edges} does not guarantee that an edge is necessary in the case of multiple optimal paths between the two points. However, this is uncommon, so pruning these edges will yield only a marginal running time improvement.

Experimental evaluation of SVGs can be found in the Experiments section at the end of the paper.

\section{Edge N-Level Sparse Visibility Graphs}

Before we discuss ENLSVGs, it is important to understand the flaws of the SVG Algorithm. Even though Theorem \ref{thm:necessary_edges} states that every edge in an SVG is necessary with a few rare exceptions, a large percentage of the edges are only useful for a small set of start-goal pairs. An example is Figure \ref{fig:edgeuvlimiteduse}, where edge $uv$ is only useful for constructing a path between the two marked points. Figure \ref{fig:limiteduseedgeclique} shows a clique of edges, each of which are useful for only a few start or goal points.

\begin{figure}[!h]
  \centering
    \begin{subfigure}[b]{.5\linewidth}
      \centering
      \includegraphics[width=.9\linewidth]{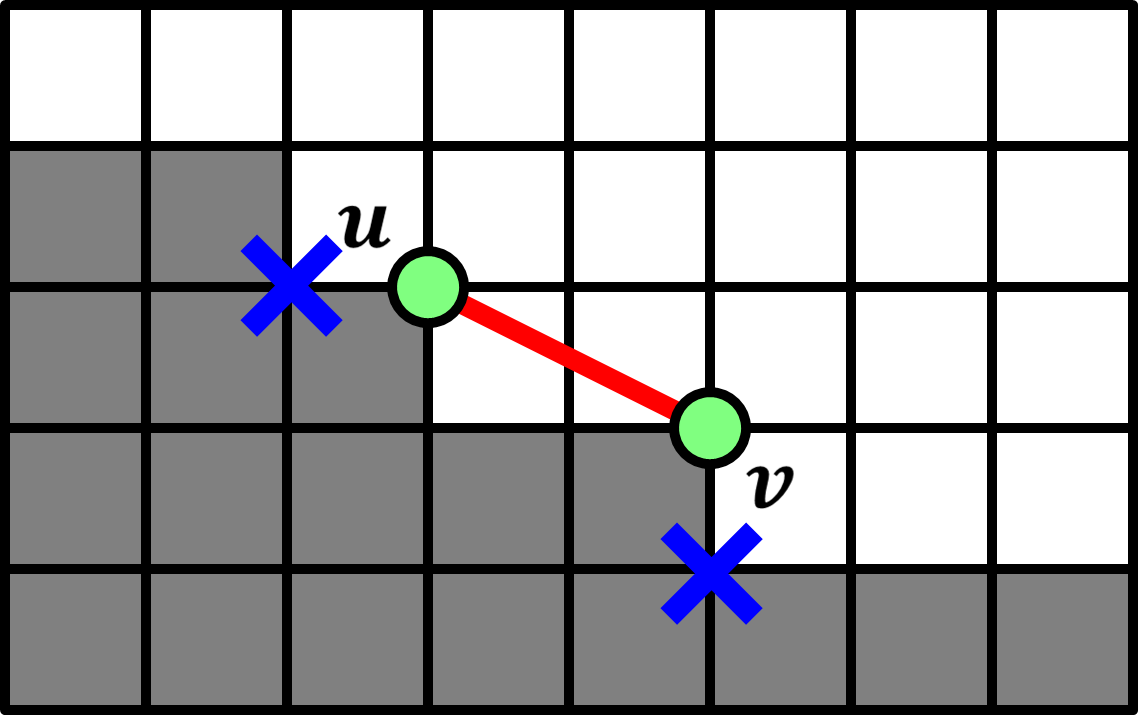}
      \caption{}
      \label{fig:edgeuvlimiteduse}
    \end{subfigure}%
    \begin{subfigure}[b]{.5\linewidth}
      \centering
      \includegraphics[width=.9\linewidth]{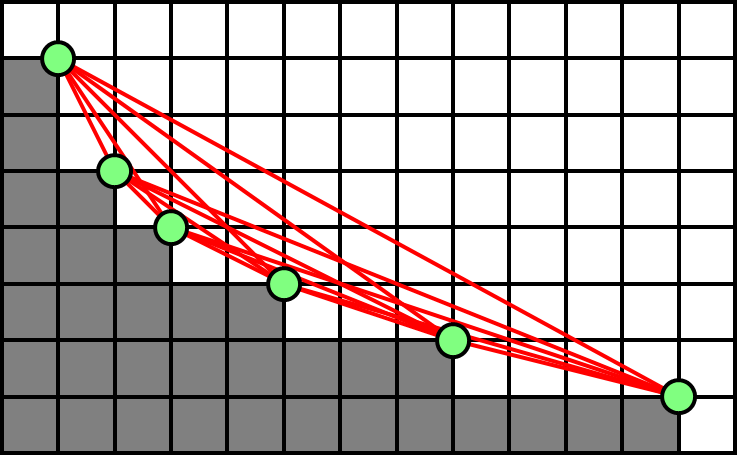}
      \caption{}
      \label{fig:limiteduseedgeclique}
    \end{subfigure}%
  \caption{Edges with limited usefulness within a concave section of blocked tiles.}
  \label{fig:limiteduseedges}
\end{figure}

\subsection{Edge Levels}

In SVGs, we prune outgoing edges that, when traversed, cannot be taut on the next hop. We extend this concept by looking further ahead than one hop, and prune outgoing edges that cannot be taut in future hops. We note that in SVGs, edges pruned due to a lack of taut exits are present in an optimal path only if they are the first or last hop of the search. We thus define the following concept of edge levels:

\begin{defn} Edge Level\\
\label{dfn:edgelevel}
An edge is level $k \geq 0$ if at any one of its endpoints, it has no taut neighbouring edge of level more than $k-1$, and if $k > 0$, also has a taut neighbouring edge of level $k-1$. Edges that do not fit this definition have level $\infty$.
\end{defn}

All ``edges" not in the Sparse Visibility Graph are referred to as Level-0 edges. The idea is that for an edge $e$ of level $\ell$, for any taut path that passes through $e$, edge $e$ must be the $k$\textsuperscript{th} hop from one of the endpoints of the path, for some $k \leq \ell$. If we were to restrict our search to only edges of increasing level from either end, all taut paths will be considered in the search, maintaining optimality.

\begin{algorithm}
\caption{ComputeEdgeLevels}\label{alg:computeEdgeLevels}
\begin{algorithmic}[1]
\Procedure{ComputeEdgeLevels}{$E$}
    \ForEach {$e = (u,v) \in E$}
\State      $e$.level $\gets \infty$
    \EndFor
\State  hasChanges $\gets$ True
\State  $\ell \gets 1$
    \While {hasChanges}
\State  hasChanges $\gets$ False
        \ForEach {$e = (u,v) \in E$}
            \If {$u$ or $v$ has no taut exit of level $\geq \ell$}
\State              $e$.level $\gets \ell$
\State              hasChanges $\gets$ True
            \EndIf
        \EndFor
\State  $\ell \gets \ell + 1$
    \EndWhile
\EndProcedure
\end{algorithmic}
\end{algorithm}

These edge levels can be computed by iteratively pruning edges level-by-level as shown in Algorithm \ref{alg:computeEdgeLevels}. Note that the computed edge levels are independent of the order the edges are selected in line 8. A simple algorithm is used for ease of understanding, though we believe that this procedure can be implemented with a more efficient algorithm.

Figures \ref{fig:edgelevels}, \ref{fig:edgelevelszoom} illustrate an example of edge levels on an SVG. Green edges are level $1$, yellow edges are level $2$, orange edges are level $3$, and the single red edge is level $4$.

\begin{figure}[!h]
  \centering
    \includegraphics[width=.75\linewidth]{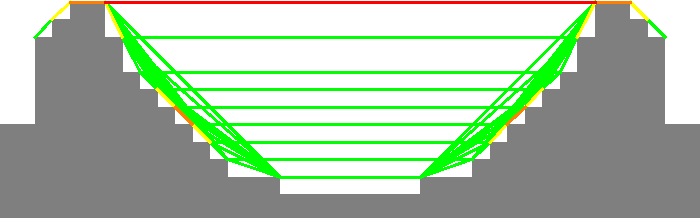}
  \caption{Edge levels in a Sparse Visibility Graph}
  \label{fig:edgelevels}
\end{figure}

\begin{figure}[!h]
  \centering
    \includegraphics[width=.55\linewidth]{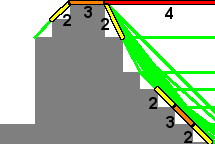}
  \caption{Zoom-in of Figure \ref{fig:edgelevels}, with edges of level $2$ and above outlined and labelled.}
  \label{fig:edgelevelszoom}
\end{figure}

With the edge levels defined as before, we have the following results on the edge levels:\\

\begin{lem}
\label{lem:tautpathfinitelevels}
Consider any taut path. Let the levels of the edges along be the path be $\ell_1,\ell_2,\cdots,\ell_n$ respectively.
Then for each $i \in \{2,3,\cdots,n-1\}$, if $\ell_i$ is finite, then either $\ell_{i-1} < \ell_i$ or $\ell_{i+1} < \ell_i$.
\end{lem}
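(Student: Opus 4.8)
The plan is to argue by contradiction from the definition of edge level (Definition~\ref{dfn:edgelevel}). Fix a taut path with edge levels $\ell_1,\dots,\ell_n$, pick an interior index $i \in \{2,\dots,n-1\}$ with $\ell_i$ finite, and suppose for contradiction that $\ell_{i-1} \geq \ell_i$ and $\ell_{i+1} \geq \ell_i$. Let $u$ be the vertex shared by hop $i-1$ and hop $i$, and let $w$ be the vertex shared by hop $i$ and hop $i+1$; so edge $e_i$ runs from $u$ to $w$. The key observation is that since the path is taut, the subpath through $u$ (namely hop $i-1$ followed by hop $i$) is taut, so $e_{i-1}$ is a taut neighbouring edge of $e_i$ at the endpoint $u$; symmetrically $e_{i+1}$ is a taut neighbouring edge of $e_i$ at the endpoint $w$.

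Next I would apply Definition~\ref{dfn:edgelevel} to $e_i$ at each of its two endpoints. Write $\ell_i = k$. By the definition, at the endpoint $u$, $e_i$ has no taut neighbouring edge of level more than $k-1$; hence every taut neighbour of $e_i$ at $u$ has level $\leq k-1$, and in particular $\ell_{i-1} \leq k-1 < k = \ell_i$. This already contradicts the assumption $\ell_{i-1} \geq \ell_i$. (Symmetrically, the endpoint $w$ forces $\ell_{i+1} \leq k-1 < \ell_i$, contradicting $\ell_{i+1} \geq \ell_i$.) So at least one of $\ell_{i-1} < \ell_i$, $\ell_{i+1} < \ell_i$ must hold, which is the claim. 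The case $k = 0$ does not arise here since $e_i$ is an edge of the SVG (level-$0$ "edges" are exactly the non-SVG ones), so $k \geq 1$ and $k-1 \geq 0$ is a legitimate level bound.

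The main obstacle, and the point that needs the most care, is the bookkeeping around tautness of subpaths and the notion of "taut neighbouring edge": I must be sure that tautness of the whole path really does imply tautness of each length-two subpath $parent(u)$-$u$-$v$ (this is essentially the local characterisation of taut paths cited from \cite{oh_stricttheta}, and it is also exactly the successor condition used by Taut A*), and that "$e_{i-1}$ is a taut neighbouring edge of $e_i$ at $u$" is symmetric in the two edges so that Definition~\ref{dfn:edgelevel} applied to $e_i$ constrains $\ell_{i-1}$. Everything else is a one-line appeal to the definition. I would also remark that the endpoints $i=1$ and $i=n$ are genuinely excluded: the first and last hops of the path have only one interior neighbour, so the definition gives no two-sided constraint there, which is precisely why the lemma restricts to $2 \leq i \leq n-1$.
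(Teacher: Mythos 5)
Your overall strategy -- contradiction, tautness of the two length-two subpaths around $e_i$, then an appeal to Definition~\ref{dfn:edgelevel} -- is exactly the paper's argument. But there is a concrete error in how you apply the definition. Definition~\ref{dfn:edgelevel} says an edge has level $k$ if \emph{at (at least) one of its endpoints} it has no taut neighbouring edge of level more than $k-1$; this is an existential condition over the two endpoints, as confirmed by line~10 of Algorithm~\ref{alg:computeEdgeLevels} (``if $u$ \emph{or} $v$ has no taut exit of level $\geq \ell$''). You instead read it as holding at the specific endpoint $u$, and then ``symmetrically'' at $w$. Taken literally, your argument derives $\ell_{i-1} \leq k-1 < \ell_i$ unconditionally, i.e.\ it would prove that \emph{both} $\ell_{i-1} < \ell_i$ and $\ell_{i+1} < \ell_i$ hold for every interior finite-level edge. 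That stronger statement is false (it would make every interior edge a strict local maximum of levels, contradicting the increasing-then-decreasing shape established in Theorem~\ref{thm:tautpathfinitelevels}), which is precisely why the lemma only claims the disjunction.

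The fix is short and keeps your structure. Either case-split on which endpoint witnesses the definition -- if it is $u$ you contradict $\ell_{i-1} \geq \ell_i$, if it is $w$ you contradict $\ell_{i+1} \geq \ell_i$, and in both cases the assumed conjunction fails -- or argue as the paper does: under the assumption $\ell_{i-1} \geq \ell_i$ and $\ell_{i+1} \geq \ell_i$, edge $e_i$ has a taut neighbouring edge of level $\geq \ell_i$ at \emph{each} endpoint, so the existential condition defining level $k$ fails for every $k \leq \ell_i$, forcing the level of $e_i$ to be at least $\ell_i + 1$ (or $\infty$), a contradiction. Your supporting observations -- that tautness of the whole path gives tautness of each length-two subpath, that ``taut neighbouring edge'' is symmetric, and that $i=1$ and $i=n$ are rightly excluded -- are all correct and worth keeping.
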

\begin{proof}
Suppose that there is an $i$ such that $\ell_{i-1} \geq \ell_i$ and $\ell_{i+1} \geq \ell_i$. As the subpaths $e_{i-1}e_i$ and $e_ie_{i+1}$ are taut, this means edge $e_i$ has neighbouring edges on both endpoints with level $\geq \ell_i$, implying the level of $e_i$ is at least $\ell_i+1$, which is a contradiction.
\end{proof}

\begin{thm}
\label{thm:tautpathfinitelevels}

Assuming that every edge has a finite level, the sequence of edges of any taut path between the Start and the Goal vertices will be of the form

\begin{center}
$e_1e_2\cdots e_k e'_{k+1} \cdots e'_n$
\end{center}

where edges $e_1e_2\cdots e_k$ have strictly increasing levels, and $e'_{k+1}e'_{k+2}\cdots e_n$ have strictly decreasing levels.\\
\end{thm}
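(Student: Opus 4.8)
The plan is to derive this structural result directly from Lemma \ref{lem:tautpathfinitelevels} by an extremal argument. First I would pick the index $k$ that maximizes $\ell_k$ along the path; if there are ties, take (say) the smallest such index. The claim is then that the levels strictly increase on $e_1,\dots,e_k$ and strictly decrease on $e_k,\dots,e_n$. The key observation is that Lemma \ref{lem:tautpathfinitelevels} forbids any interior edge from being a "local maximum" in the weak sense: for $2 \le i \le n-1$, we cannot have both $\ell_{i-1} \ge \ell_i$ and $\ell_{i+1} \ge \ell_i$. Since every edge has finite level by assumption, this applies to every interior edge.

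The main step is to show strict monotonicity on each side of $k$. Consider the prefix $e_1,\dots,e_k$. Suppose for contradiction it is not strictly increasing, so there is some smallest $i \le k$ with $\ell_{i-1} \ge \ell_i$. Since $i \le k$ and $k$ is where the maximum is attained, there must be a later index where the level rises again to reach the peak, so somewhere in $\{i,\dots,k\}$ the sequence stops decreasing and starts increasing (or stays flat then rises) — more carefully, I would locate an interior index $j$ with $\ell_{j-1} \ge \ell_j$ and $\ell_{j+1} \ge \ell_j$, contradicting the lemma. Concretely: if the prefix is not strictly increasing, let $j$ be an index in $\{2,\dots,k\}$ minimizing $\ell_j$ among those positions where monotonicity first fails; one checks $\ell_{j-1}\ge\ell_j$ (failure of increase at $j$) and $\ell_{j+1}\ge\ell_j$ (else $\ell_{j+1}<\ell_j\le\ell_{j-1}$, but then we can slide $j$ rightward or use minimality), and $j \le k \le n-1$ so $e_j$ is interior — contradiction. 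A symmetric argument handles the suffix $e_k,\dots,e_n$, using that $k$ is the leftmost maximizer to get strictness on the left edge of the decreasing part. Finally, edge $e_k$ itself: $\ell_{k-1} < \ell_k$ from the increasing prefix and $\ell_{k+1} < \ell_k$ from the decreasing suffix, which is consistent (and in fact $k$ is the unique strict maximum), so the path has exactly the stated unimodal shape with $e'_{k+1},\dots,e'_n$ being the decreasing tail.

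I expect the main obstacle to be handling the boundary edges $e_1$ and $e_n$ cleanly, since Lemma \ref{lem:tautpathfinitelevels} only constrains interior edges $i \in \{2,\dots,n-1\}$. The argument must ensure that when the maximizing index $k$ is $1$ or $n$, the degenerate cases (empty increasing or decreasing part) are still covered, and that the failure index $j$ extracted in the contradiction is genuinely interior — this is why choosing $k$ as an extreme (leftmost) maximizer and choosing $j$ via a secondary minimality condition matters. A secondary nuisance is the bookkeeping around ties in the level sequence: I would state precisely that "strictly increasing" is what we prove, so any plateau anywhere would itself produce a forbidden interior configuration, which actually makes the tie analysis fall out of the same lemma rather than requiring separate treatment.
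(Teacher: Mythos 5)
Your core idea---that Lemma \ref{lem:tautpathfinitelevels} forbids interior \emph{weak local minima} of the level sequence (you call these ``local maxima,'' but the inequalities you write, $\ell_{i-1}\ge\ell_i$ and $\ell_{i+1}\ge\ell_i$, describe a minimum) and that this forces a unimodal shape---is correct, and your prefix argument is sound. However, anchoring the decomposition at the \emph{argmax} $k$ creates a genuine gap at the peak. The lemma does not exclude the maximum value being attained at two consecutive indices: the sequence $1,2,2,1$ satisfies the lemma at both interior positions (at $i=2$ because $\ell_1<\ell_2$, at $i=3$ because $\ell_4<\ell_3$), yet there is no single index $k$ for which the levels strictly increase on $e_1,\dots,e_k$ and strictly decrease on $e_k,\dots,e_n$. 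Your assertions that $\ell_{k+1}<\ell_k$ and that ``$k$ is the unique strict maximum'' are therefore false in general, and the symmetric suffix argument cannot close: with $k=2$ the leftmost maximizer, the failure $\ell_2=\ell_3$ does not produce any interior $j$ with $\ell_{j-1}\ge\ell_j$ and $\ell_{j+1}\ge\ell_j$, so no contradiction is reachable. The theorem itself survives because its statement places no constraint between $\ell_k$ and $\ell_{k+1}$: the correct split for $1,2,2,1$ puts the boundary between the two peak edges, with increasing part $e_1e_2$ and decreasing part $e_3'e_4'$.

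The fix is short and is essentially what the paper does: instead of locating the maximum, locate the \emph{first} index $i$ with $\ell_i\ge\ell_{i+1}$ (if none exists the whole path is strictly increasing and you are done). Minimality makes $e_1,\dots,e_i$ strictly increasing, and the lemma applied to $e_{i+1}$ gives $\ell_{i+2}<\ell_{i+1}$; applying it again to $e_{i+2}$, and so on inductively, every subsequent level strictly decreases. This sidesteps the plateau-at-the-peak case entirely, because the split is placed between $e_i$ and $e_{i+1}$ rather than at a single peak edge claimed to belong to both monotone runs.
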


\begin{proof}
Let the levels of the edges along be the path be $\ell_1,\ell_2,\cdots,\ell_n$ respectively. From the lemma, we can see that in the path, if there is an $i$ such that $\ell_i \geq \ell_{i+1}$, then we must have $\ell_{i+1} > \ell_{i+2}$, implying $\ell_{i+2} > \ell_{i+3}$ and so on, inductively proving that the remaining edges of the path will have strictly decreasing levels, proving Theorem \ref{thm:tautpathfinitelevels}.
\end{proof}

\subsection{Level-W Edges}

Not all edges will be assigned a finite level by Algorithm \ref{alg:computeEdgeLevels}. We refer to the remaining edges (with level $\infty$) as level-W edges. Notably, an edge is level-W if and only if it is part of some taut cycle. Examples of taut cycles are shown in Figure \ref{fig:tautcycles}. We observe that the graph of level-W edges is similar to the tangent graphs described in \cite{tangent_graphs}, as taut cycles wrap around the convex hulls of obstacles.

\begin{figure}[!h]
  \centering
    \begin{subfigure}[b]{.37\linewidth}
      \centering
      \includegraphics[width=\linewidth]{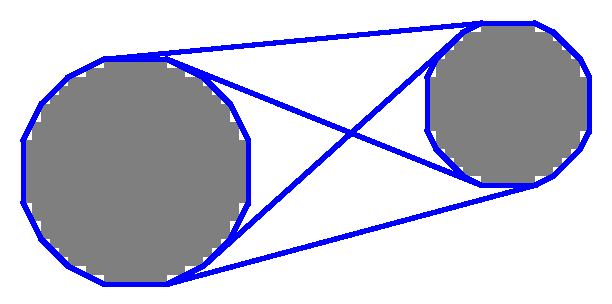}
      \caption{}
      \label{fig:tautcycle2}
    \end{subfigure}%
    \begin{subfigure}[b]{.37\linewidth}
      \centering
      \includegraphics[width=\linewidth]{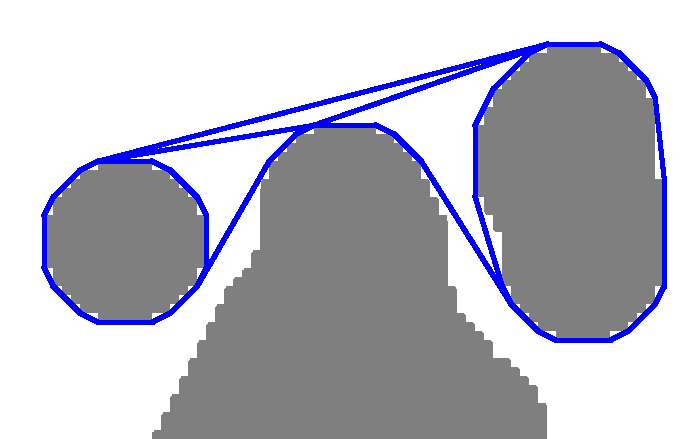}
      \caption{}
      \label{fig:tautcycle3}
    \end{subfigure}%
    \begin{subfigure}[b]{.26\linewidth}
      \centering
      \includegraphics[width=\linewidth]{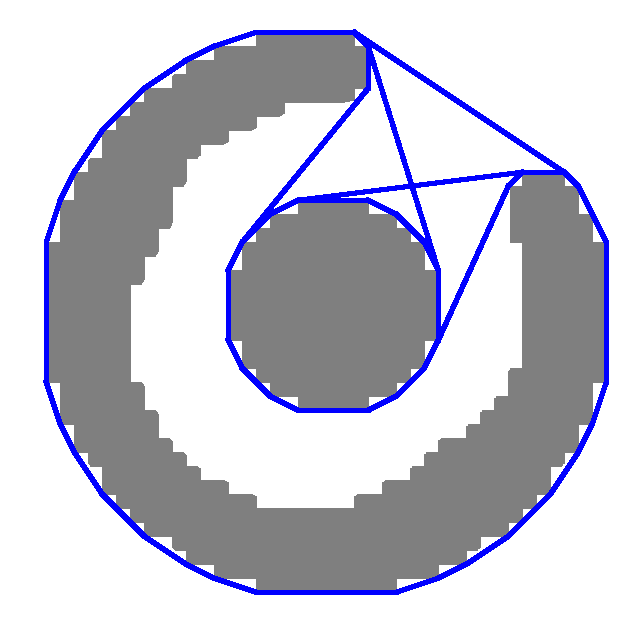}
      \caption{}
      \label{fig:tautcycle4}
    \end{subfigure}%
  \caption{Examples of cycles of taut edges.}
  \label{fig:tautcycles}
\end{figure}

With this definition of level-W edges, we have a similar theorem regarding the edge levels of taut paths.\\


\begin{thm}
\label{thm:tautpathlevels}

The sequence of edges of any taut path between the start and goal vertices will be of the form:

\begin{center}
$e_1e_2\cdots e_{k_1} w_{k_1+1} w_{k_1+2} \cdots w_{k_2} e'_{k_2+1} e'_{k_2+2} \cdots e'_n$
\end{center}

where edges $e_1e_2\cdots e_{k_1}$ have strictly increasing levels, $w_{k_1+1} w_{k_1+2} \cdots w_{k_2}$ are level-W, and $e'_{k_2+1}e'_{k_2+2}\cdots e_n$ have strictly decreasing levels.
\end{thm}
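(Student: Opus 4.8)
The plan is to reduce the statement to Theorem~\ref{thm:tautpathfinitelevels} together with two extra facts about where level-$W$ (i.e.\ level-$\infty$) edges can appear in a taut path. Write $\ell_1,\dots,\ell_n$ for the levels of the edges of the given taut path, and recall that in a taut path every consecutive pair $e_{i-1}e_i$ forms a taut subpath, so Lemma~\ref{lem:tautpathfinitelevels} applies at \emph{every} interior index $i\in\{2,\dots,n-1\}$ for which $\ell_i$ is finite. The one small observation I will use repeatedly is that if $\ell_i$ is finite and $\ell_{i-1}<\ell_i$ (resp.\ $\ell_{i+1}<\ell_i$), then that neighbouring level is itself finite, since an $\infty$ level is never strictly below a finite one. (Note that the taut-cycle characterization of level-$W$ edges is not needed for this proof.)

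The first step is to show that the level-$W$ edges of the path occupy one contiguous block of positions. Suppose not; then there are positions $i<j$ with $\ell_i=\ell_j=\infty$ and at least one finite-level edge strictly between them. Let $m\in\{i+1,\dots,j-1\}$ minimise $\ell_m$ among the finite-level indices in that range. Since $1\le i<m<j\le n$, index $m$ is interior, so Lemma~\ref{lem:tautpathfinitelevels} produces a neighbour $m\pm1$ of strictly smaller, hence finite, level; this neighbour cannot be $i$ or $j$ (both $\infty$), so it lies strictly between $i$ and $j$ and contradicts the minimality of $\ell_m$. Hence all level-$W$ edges of the path form a single block; write its positions as $k_1+1,\dots,k_2$, allowing $k_1=k_2$ when there are none. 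If there are no level-$W$ edges, every edge has finite level and Theorem~\ref{thm:tautpathfinitelevels} gives the result immediately with an empty middle block, so from now on assume $k_1<k_2$, in which case $\ell_{k_1+1}=\dots=\ell_{k_2}=\infty$, all other $\ell_i$ are finite, and $k_1\le k_2-1\le n-1$.

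The remaining step is to show the prefix $\ell_1,\dots,\ell_{k_1}$ is strictly increasing and the suffix $\ell_{k_2+1},\dots,\ell_n$ is strictly decreasing. For the prefix, suppose some $i<k_1$ had $\ell_i\ge\ell_{i+1}$; applying Lemma~\ref{lem:tautpathfinitelevels} at $i+1$ (interior, finite level) forces $\ell_{i+2}<\ell_{i+1}$, and iterating — exactly as in the proof of Theorem~\ref{thm:tautpathfinitelevels}, with each successive index still interior and of finite level, as guaranteed by the block structure — forces $\ell_{i+1}>\ell_{i+2}>\dots>\ell_{k_1}$; but then applying the lemma at the interior finite index $k_1$ gives $\ell_{k_1+1}<\ell_{k_1}$, contradicting $\ell_{k_1+1}=\infty$. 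Symmetrically, since $\ell_{k_2}=\infty\ge\ell_{k_2+1}$, the same cascade started at position $k_2$ shows $\ell_{k_2+1}>\ell_{k_2+2}>\dots>\ell_n$. Splicing the three parts together yields the claimed form $e_1\cdots e_{k_1}\,w_{k_1+1}\cdots w_{k_2}\,e'_{k_2+1}\cdots e'_n$.

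The main obstacle is the genuinely new content concentrated in the first step (contiguity of the level-$W$ block, via the minimal-finite-level argument) and in the ``$\ell_{k_1+1}=\infty$ blocks a downward bounce'' device that pins the prefix to be increasing; once these are established, the rest is the already-proven cascade from Theorem~\ref{thm:tautpathfinitelevels}. The only bookkeeping to watch is the boundary cases $k_1\in\{0,1\}$ and $k_2\in\{n-1,n\}$, where the prefix or suffix is empty or a single edge and the corresponding claim is vacuous.
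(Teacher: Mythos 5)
Your proof is correct. It rests on the same engine as the paper's proof --- the cascade consequence of Lemma~\ref{lem:tautpathfinitelevels}, namely that once $\ell_i \geq \ell_{i+1}$ with $\ell_{i+1}$ finite, the levels must strictly decrease (and hence stay finite) for the rest of the path --- but you organize the argument differently and more completely. The paper runs a single forward pass: it locates the first ``descent onto a finite level,'' invokes the cascade from there, and then asserts that ``this gives the required form,'' leaving implicit both the contiguity of the level-$W$ block and the fact that the segment preceding it is strictly increasing. You instead prove contiguity of the level-$W$ block up front via a separate minimal-finite-level argument (the minimizer would need a strictly smaller finite neighbour inside the block, which cannot exist), and then pin down the prefix with the observation that a downward bounce before position $k_1$ would cascade into $\ell_{k_1+1} < \ell_{k_1}$, contradicting $\ell_{k_1+1} = \infty$. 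The trade-off: the paper's version is a two-line delta on Theorem~\ref{thm:tautpathfinitelevels}, while yours is longer but makes explicit exactly the two facts (one $W$-block; increasing prefix) that the paper's phrase ``the required form'' quietly assumes, and it handles the boundary cases cleanly. Both are valid; yours would serve better as a self-contained proof.
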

\begin{proof}
The proof is similar to that of Theorem \ref{thm:tautpathfinitelevels}. Let the levels of the edges along be the path be $\ell_1,\ell_2,\cdots,\ell_n$ respectively. From Lemma \ref{lem:tautpathfinitelevels}, we can see that in the path, if there is an $i$ such that $\ell_i \geq \ell_{i+1}$ and $\ell_{i+1}$ is finite, then we must have $\ell_{i+1} > \ell_{i+2}$, implying $\ell_{i+2} > \ell_{i+3}$ and so on, inductively proving that the remaining edges of the path will have strictly decreasing levels. This gives the required form.
\end{proof}

As the optimal path is taut, it obeys the rule in Theorem \ref{thm:tautpathlevels}. Thus, other than near the start and the end of the route, we only have to search level-W edges to ensure that the optimal path is included in the search.

\subsection{Edge Marking and Search}

Edge Marking is how we make use of Theorem \ref{thm:tautpathlevels} in the search. Before the search, from both the start and goal vertices, we run a depth-first search to mark all finite-level edges reachable by a taut path of strictly increasing edge levels. We stop the search when we reach Level-W edges. Figure \ref{fig:enlsvg_marking} illustrates the edges that are marked this way.

\begin{figure}[!h]
  \centering
    \begin{subfigure}[b]{.5\linewidth}
      \centering
      \includegraphics[width=.9\linewidth]{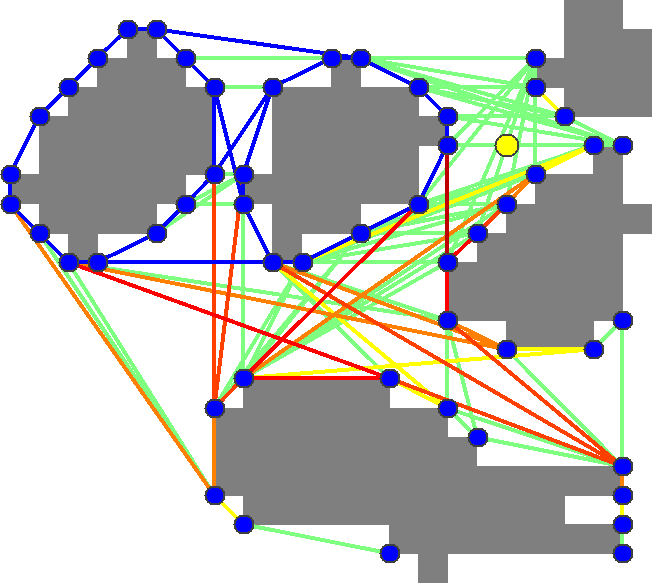}
      \caption{Before Marking}
      \label{fig:enlsvg_beforemark}
    \end{subfigure}%
    \begin{subfigure}[b]{.5\linewidth}
      \centering
      \includegraphics[width=.9\linewidth]{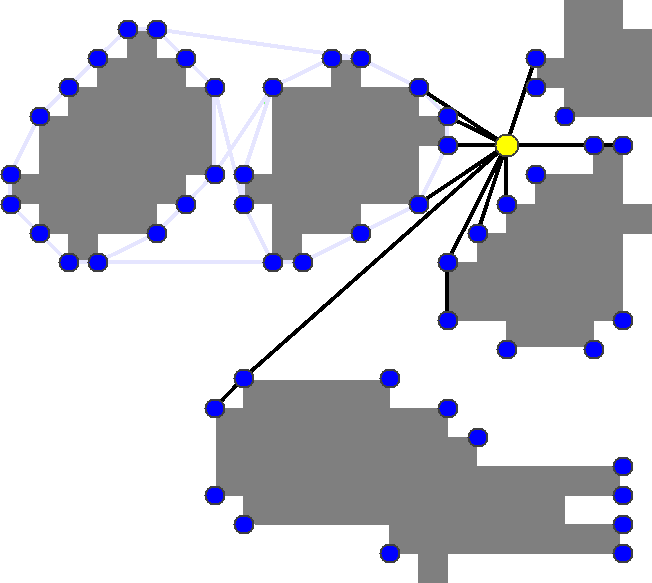}
      \caption{After Marking}
      \label{fig:enlsvg_aftermark}
    \end{subfigure}%
  \caption{Illustration of the edge marking process.}
  \label{fig:enlsvg_marking}
\end{figure}

We then restrict our A* search to use only marked edges and level-W edges. Let $H$ denote the subgraph induced by the marked and level-W edges. To prove that the algorithm is optimal, it suffices to show that the optimal path resides within the graph $H$.

\begin{thm}
All taut paths (including the optimal path) from the start $s$ to the goal $t$ reside within the graph $H$.
\end{thm}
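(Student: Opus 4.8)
The plan is to reduce the statement entirely to Theorem~\ref{thm:tautpathlevels} together with the definition of the marking procedure and of the subgraph $H$. Since the optimal path is taut, it suffices to prove the claim for an arbitrary taut path $P = (e_1, e_2, \ldots, e_n)$ from $s$ to $t$, namely that every edge of $P$ lies in $H$. First I would invoke Theorem~\ref{thm:tautpathlevels} to write $P$ as a prefix $e_1 \cdots e_{k_1}$ of strictly increasing (hence finite) levels, a block $w_{k_1+1} \cdots w_{k_2}$ of level-W edges, and a suffix $e'_{k_2+1} \cdots e'_n$ of strictly decreasing (hence finite) levels, where any of the three blocks may be empty. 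The level-W block is contained in $H$ by definition of $H$, so the remaining work is to show that the prefix edges are marked from $s$ and the suffix edges are marked from $t$.

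For the prefix, I would argue by induction on $i \leq k_1$ that the initial segment $(e_1, \ldots, e_i)$ is a taut path of strictly increasing edge levels starting at $s$: it is taut because it is a subpath of the taut path $P$, and its levels are strictly increasing by the decomposition above. Since the marking DFS from $s$ explores exactly the finite-level edges reachable along such increasing taut paths (halting only upon reaching a level-W edge, which by the decomposition does not occur within the prefix), it marks $e_i$. Hence every edge $e_1, \ldots, e_{k_1}$ is marked.

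For the suffix, I would appeal to the symmetry of tautness and of edge levels under path reversal. Reversing the segment $(e'_{k_2+1}, \ldots, e'_n)$ yields a path that starts at $t$, is still taut (reversing a taut path leaves every heading change wrapped tightly around the same obstacle), and whose edge levels $\ell_n, \ell_{n-1}, \ldots, \ell_{k_2+1}$ are now strictly increasing, since edge level is a property of the undirected edge. The same inductive argument as for the prefix then shows the marking DFS from $t$ marks every $e'_j$. Combining the three blocks, every edge of $P$ is marked (from $s$ or from $t$) or is level-W, so $P$ lies entirely in $H$; applying this to the optimal path finishes the proof.

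I do not anticipate a serious obstacle here; the one point requiring care is to state cleanly that the marking DFS reaches every edge of the increasing prefix, i.e. to match "reachable by a taut path of strictly increasing edge levels" with the initial segments of $P$, and to justify the reversal symmetry of tautness so that the goal-side marking covers the strictly decreasing suffix. Degenerate cases in which a block is empty — for instance a direct edge $s$–$t$, or a taut path lying entirely inside the level-W subgraph — are handled uniformly by the same decomposition and need no separate treatment.
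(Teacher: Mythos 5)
Your proposal is correct and follows essentially the same route as the paper's own proof: apply Theorem~\ref{thm:tautpathlevels} to decompose the taut path, observe that the increasing prefix is marked from $s$ and the decreasing suffix from $t$, and note that the level-W block lies in $H$ by construction. Your version merely spells out the induction on initial segments and the reversal symmetry that the paper leaves implicit.
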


\begin{proof}
Consider any taut path from $s$ to $t$. It must have edges of the form in Theorem \ref{thm:tautpathlevels}. Edges $e_1,e_2\cdots ,e_{k_1}$ would have been marked from $s$ as each of these edges are reachable by a taut path of increasing edge levels from $s$. Similarly, edges $e'_{k_2+1},\cdots e'_n$ will be marked from $t$. Thus all of the edges in the path are either marked or Level-W, and so are in $H$.
\end{proof}

\noindent
Thus the algorithm can be summarised in three steps:
\begin{enumerate}
\item
Insert start, goal into the graph using Line-of-Sight Scans.
\item
Mark reachable edges from the start and goal vertices.
\item
Compute optimal path to the goal by running Taut A* on only the marked and Level-W edges.
\end{enumerate}

\subsection{Skip-Edges}

The graph of level-W edges can be further reduced through the concept of Skip-Edges. As seen in Figure \ref{fig:largeconvexhulls_w}, large convex hulls can produce long, unbranching paths of level-W edges. Each unbranching path can be reduced to a single edge with weight equal to the length of the path as shown in Figure \ref{fig:largeconvexhulls_skip}. We refer to these edges as Skip-Edges. Skip-Edges makes the search time dependent on the amount of detail in the map, rather than the scale of the map.

\begin{figure}[!h]
  \centering
    \begin{subfigure}[b]{.5\linewidth}
      \centering
      \includegraphics[width=.8\linewidth]{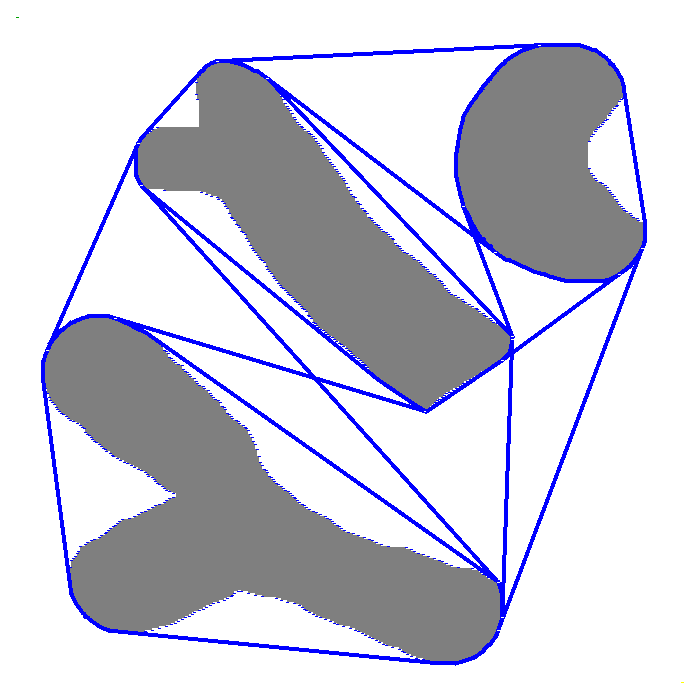}
      \caption{level-W Edges}
      \label{fig:largeconvexhulls_w}
    \end{subfigure}%
    \begin{subfigure}[b]{.5\linewidth}
      \centering
      \includegraphics[width=.8\linewidth]{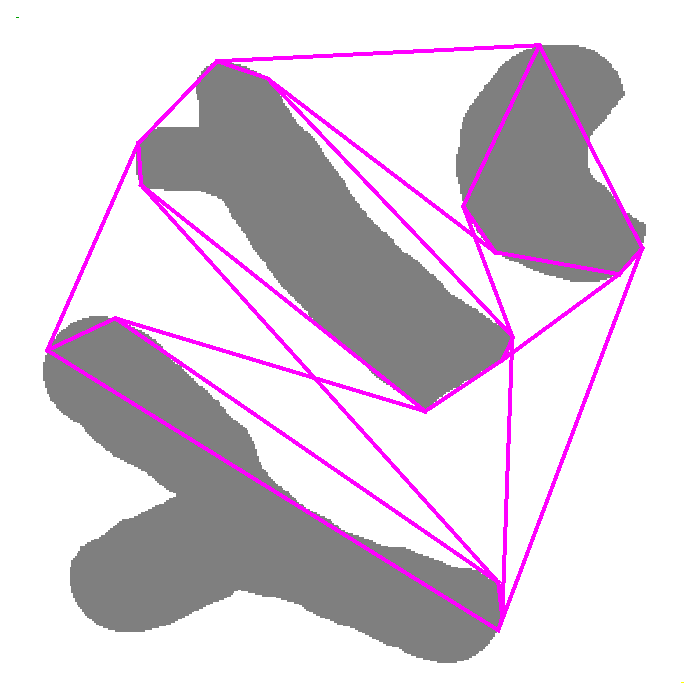}
      \caption{Skip-Edges}
      \label{fig:largeconvexhulls_skip}
    \end{subfigure}
  \caption{Skip-Edge network derived from level-W edges.}
  \label{fig:largeconvexhulls}
\end{figure}

To construct the Skip-Edge network, consider the graph $W$ induced by the level-W edges. All vertices of degree at least $3$ in $W$ are identified as Skip-Vertices. We then trace the unbranching paths of Level-W edges between Skip-Vertices to form the Skip-Edge network.

We also make a slight change to the marking scheme. If we reach a Level-W edge while marking edges of increasing level, we continue marking subsequent Level-W edges until a Skip-Vertex is reached. This is illustrated in Figure \ref{fig:enlsvg_markingwithskip}. We then run Taut A* on only marked edges and Skip-Edges.

\begin{figure}[!h]
  \centering
    \begin{subfigure}[b]{.5\linewidth}
      \centering
      \includegraphics[width=.9\linewidth]{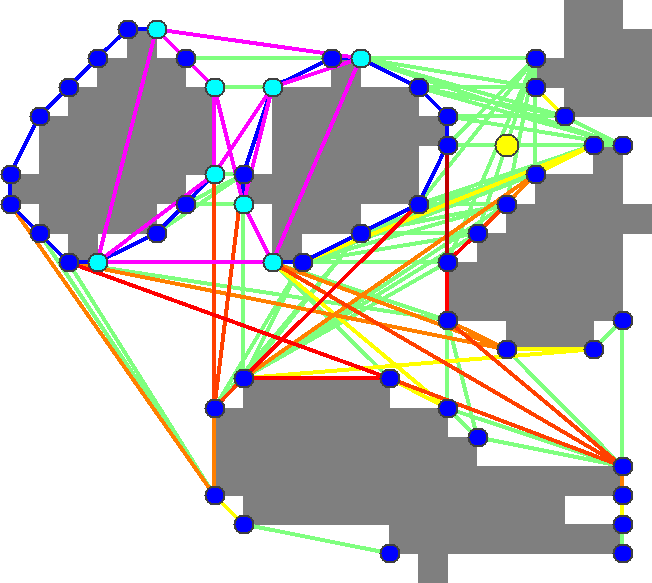}
      \caption{Before Marking}
      \label{fig:enlsvg_beforemarkwithskip}
    \end{subfigure}%
    \begin{subfigure}[b]{.5\linewidth}
      \centering
      \includegraphics[width=.9\linewidth]{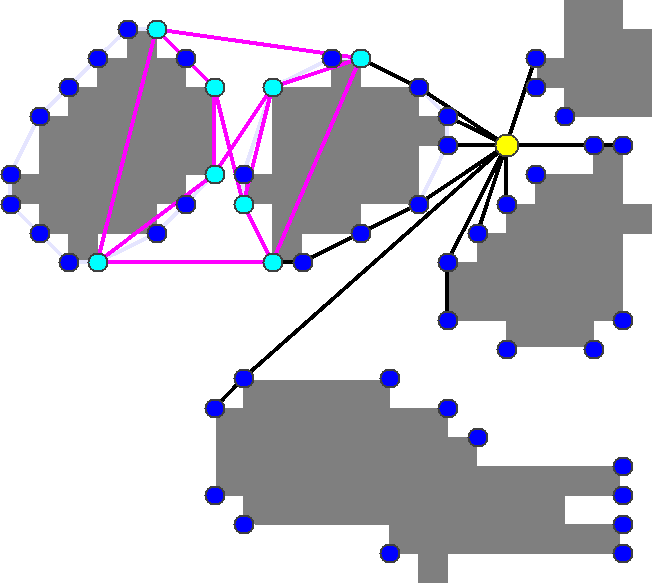}
      \caption{After Marking}
      \label{fig:enlsvg_aftermarkwithskip}
    \end{subfigure}%
  \caption{The edge marking process with Skip-Edges.}
  \label{fig:enlsvg_markingwithskip}
\end{figure}

\subsection{Search Tree Comparison}
Figure \ref{fig:ebonlakes_searchtrees2} illustrates the difference between the search trees of the SVG and ENLSVG algorithms. We can observe the running time improvement through how much of the original search tree the algorithms prune.

\begin{figure}[!h]
  \centering
    \begin{subfigure}[b]{.5\linewidth}
      \centering
      \includegraphics[width=.9\linewidth]{diagrams/ebonlakes_SVG.png}
      \caption{SVG}
      \label{fig:ebonlakes_svg2}
    \end{subfigure}%
    \begin{subfigure}[b]{.5\linewidth}
      \centering
      \includegraphics[width=.9\linewidth]{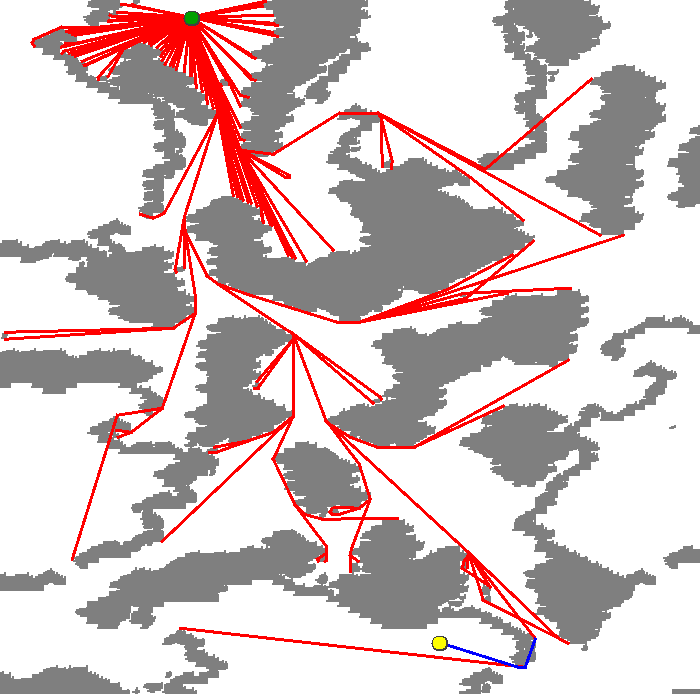}
      \caption{ENLSVG}
      \label{fig:ebonlakes_enlsvg}
    \end{subfigure}%
  \caption{Search tree comparson on the map EbonLakes.}
  \label{fig:ebonlakes_searchtrees2}
\end{figure}

\section{Experimental Results}

\subsection{Algorithm Running Time}
The algorithms compared are Theta*, the original Visibility Graph algorithm using Line-of-Sight Checks (VG\textsubscript{C}), Visibility Graphs using Line-of-Sight Scans (VG\textsubscript{S}), SVGs and ENLSVGs. Running times on each map are averaged over $1000$ to $2500$ runs using randomly-picked pairs of reachable points. The algorithms were implemented in Java\footnote{The implementations are available at \url{github.com/Ohohcakester/Any-Angle-Pathfinding}} on a 2.60 GHz Intel i5 processor with 8GB RAM.

\newcolumntype{?}{!{\vrule width 1pt}}
\begin{table*}[t]
\centering
\begin{adjustbox}{width=\linewidth}
\begin{tabular}{?c?r|r|r|r|r?r|r|r?r|r?}
\hline        Type &   Theta* &   VG\textsubscript{C} &   VG\textsubscript{S}  &   SVG  &ENLSVG  &Insert\textsubscript{ENLSVG}  &Mark\textsubscript{ENLSVG} &Search\textsubscript{ENLSVG} & Insert\textsubscript{SVG} &Search\textsubscript{SVG}\\
\hline  gen30\_2000 &$  225.64$&$ 55.57$&$  7.24 $&$ 3.18 $&$ 2.62 $&$        1.32 $&$         1.06 $&$       0.24 $&$   1.24 $&$   1.94 $\\
\hline  gen30\_4000 &$  999.52$&$222.61$&$ 24.66 $&$ 9.61 $&$ 4.38 $&$        1.86 $&$         1.58 $&$       0.95 $&$   1.78 $&$   7.83 $\\
\hline  gen30\_6000 &$ 2589.36$&$532.87$&$ 55.82 $&$22.83 $&$ 6.41 $&$        2.07 $&$         1.93 $&$       2.40 $&$   1.97 $&$  20.85 $\\
\hline  gen45\_2000 &$  383.09$&$ 45.88$&$ 11.29 $&$ 5.11 $&$ 1.34 $&$        0.63 $&$         0.47 $&$       0.24 $&$   0.66 $&$   4.45 $\\
\hline  gen45\_4000 &$ 1921.94$&$188.90$&$ 42.74 $&$18.80 $&$ 2.47 $&$        0.90 $&$         0.67 $&$       0.90 $&$   0.89 $&$  17.90 $\\
\hline  gen45\_6000 &$ 6124.34$&$452.53$&$116.27 $&$47.41 $&$ 3.69 $&$        0.96 $&$         0.76 $&$       1.97 $&$   0.93 $&$  46.47 $\\
\hline scaled\_2048 &$  570.85$&$ 19.66$&$  4.08 $&$ 2.01 $&$ 1.90 $&$        1.32 $&$         0.52 $&$       0.06 $&$   1.25 $&$   0.75 $\\
\hline scaled\_4096 &$ 4173.37$&$106.95$&$ 11.92 $&$ 5.60 $&$ 5.35 $&$        3.13 $&$         2.07 $&$       0.14 $&$   2.96 $&$   2.64 $\\
\hline scaled\_6144 &$15904.88$&$263.17$&$ 24.25 $&$10.01 $&$ 9.82 $&$        5.20 $&$         4.41 $&$       0.21 $&$   4.86 $&$   5.15 $\\
\hline  tiled\_2048 &$  357.94$&$ 30.84$&$  7.00 $&$ 3.53 $&$ 0.78 $&$        0.34 $&$         0.17 $&$       0.26 $&$   0.36 $&$   3.17 $\\
\hline  tiled\_4096 &$ 1752.23$&$131.37$&$ 30.61 $&$13.72 $&$ 1.84 $&$        0.48 $&$         0.23 $&$       1.14 $&$   0.45 $&$  13.27 $\\
\hline  tiled\_6144 &$ 4798.91$&$296.96$&$ 80.08 $&$37.38 $&$ 4.00 $&$        0.49 $&$         0.27 $&$       3.24 $&$   0.48 $&$  36.90 $\\
\hline 
\end{tabular}
\end{adjustbox}
\caption{Running times (in milliseconds) for generated, scaled and tiled maps of sizes $2000\times 2000$ to $6000\times 6000$. \\ The left half of the table compares the different algorithms. The right half breaks down the ENLSVG and SVG algorithms.}
\label{tab:large}
\end{table*}

\begin{table}[t]
\centering
\begin{adjustbox}{width=.9\columnwidth}
\begin{tabular}{|c|r|r|r|r|r|}
\hline     Maps& Theta* &  VG\textsubscript{C}  & VG\textsubscript{S}  &   SVG  &ENLSVG\\
\hline      wc3&$  5.68$&$ 0.77 $&$0.36 $&$ 0.30 $&$ 0.35 $\\
\hline    bg512&$  5.43$&$ 0.43 $&$0.30 $&$ 0.27 $&$ 0.30 $\\
\hline      sc1&$ 40.59$&$ 5.43 $&$1.76 $&$ 0.87 $&$ 0.57 $\\
\hline random10&$  4.16$&$23.81 $&$8.21 $&$ 1.46 $&$ 1.70 $\\
\hline random20&$  6.37$&$18.37 $&$8.86 $&$ 3.05 $&$ 3.12 $\\
\hline random30&$  8.94$&$16.71 $&$8.75 $&$ 4.31 $&$ 4.26 $\\
\hline random40&$  9.49$&$10.20 $&$6.22 $&$ 3.52 $&$ 2.94 $\\
\hline 
\end{tabular}
\end{adjustbox}
\caption{Running times (ms) on benchmark maps.}
\label{tab:benchmark}
\end{table}

\begin{table}[t]
\centering
\begin{adjustbox}{width=.8\columnwidth}
\begin{tabular}{|c|r|r|r|r|}
\hline     Maps&   VG\textsubscript{C}  & VG\textsubscript{S}  &   SVG  &ENLSVG\\
\hline      wc3&$   170 $&$ 143 $&$   93 $&$  100 $\\
\hline    bg512&$    83 $&$  67 $&$   53 $&$   55 $\\
\hline      sc1&$  3220 $&$ 470 $&$  291 $&$  388 $\\
\hline random10&$194026 $&$7838 $&$ 2877 $&$ 8305 $\\
\hline random20&$268934 $&$2928 $&$ 1096 $&$ 3585 $\\
\hline random30&$229677 $&$1317 $&$  539 $&$ 1670 $\\
\hline random40&$ 63610 $&$ 395 $&$  194 $&$  572 $\\
\hline 
\end{tabular}
\end{adjustbox}
\caption{Graph construction times (ms) on benchmark maps.}
\label{tab:construction}
\end{table}

\begin{figure}[!h]
  \centering
    \begin{subfigure}[b]{.5\linewidth}
      \centering
      \includegraphics[width=.85\linewidth]{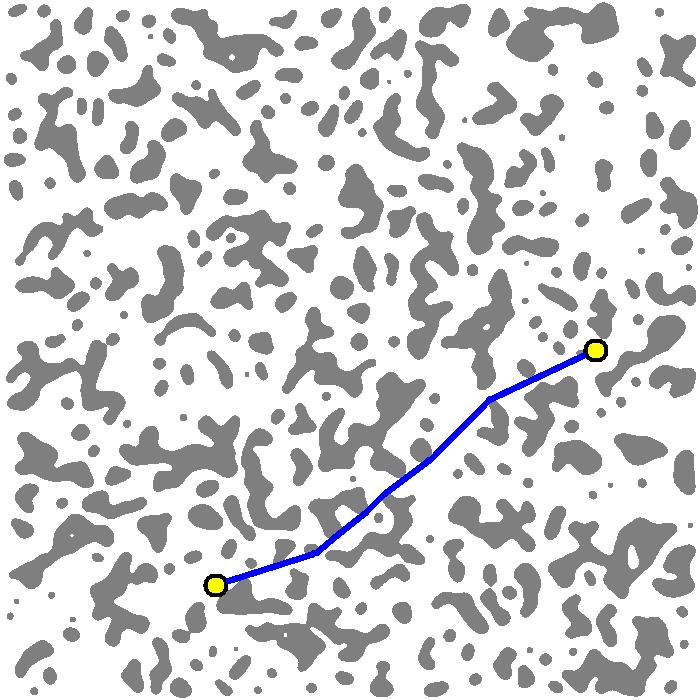}
      \caption{Generated, $30\%$ blocked}
      \label{fig:map_gen_30}
    \end{subfigure}%
    \begin{subfigure}[b]{.5\linewidth}
      \centering
      \includegraphics[width=.85\linewidth]{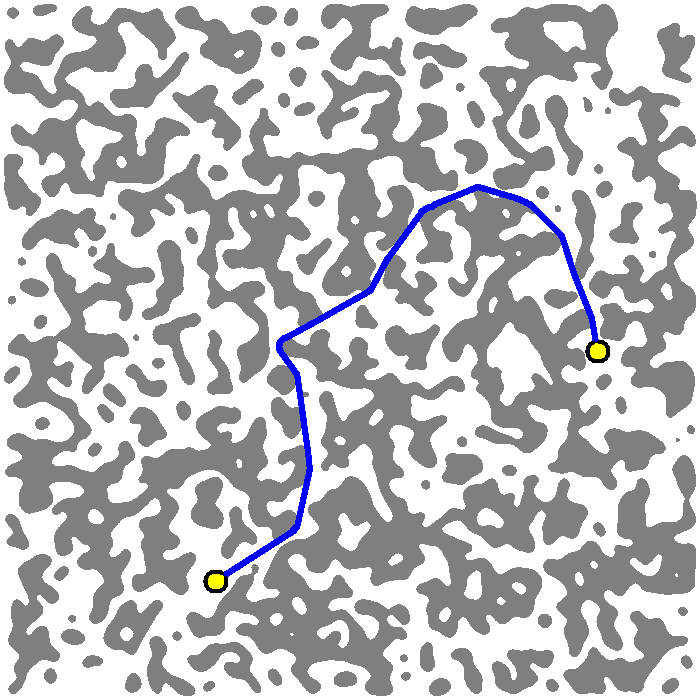}
      \caption{Generated, $45\%$ blocked}
      \label{fig:map_gen_45}
    \end{subfigure}
  \centering
    \begin{subfigure}[b]{.5\linewidth}
      \centering
      \includegraphics[width=.85\linewidth]{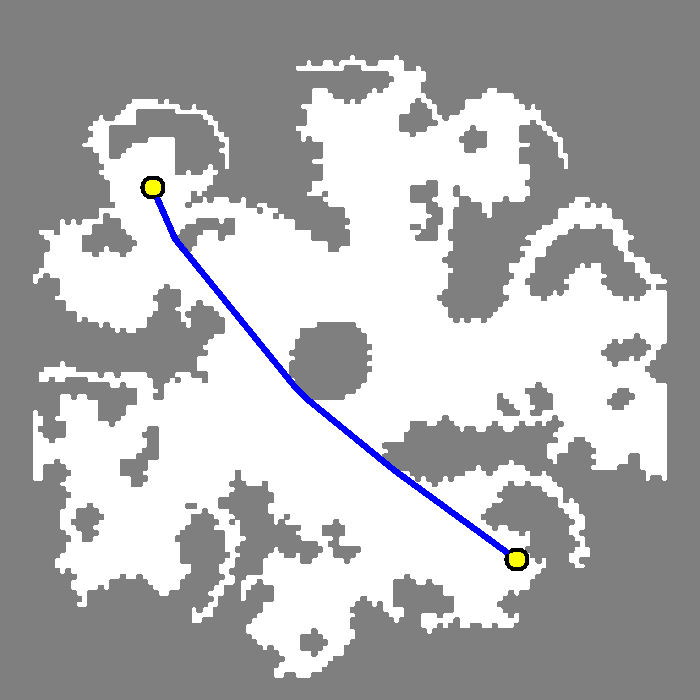}
      \caption{Upscaled benchmark map}
      \label{fig:map_scaled}
    \end{subfigure}%
    \begin{subfigure}[b]{.5\linewidth}
      \centering
      \includegraphics[width=.85\linewidth]{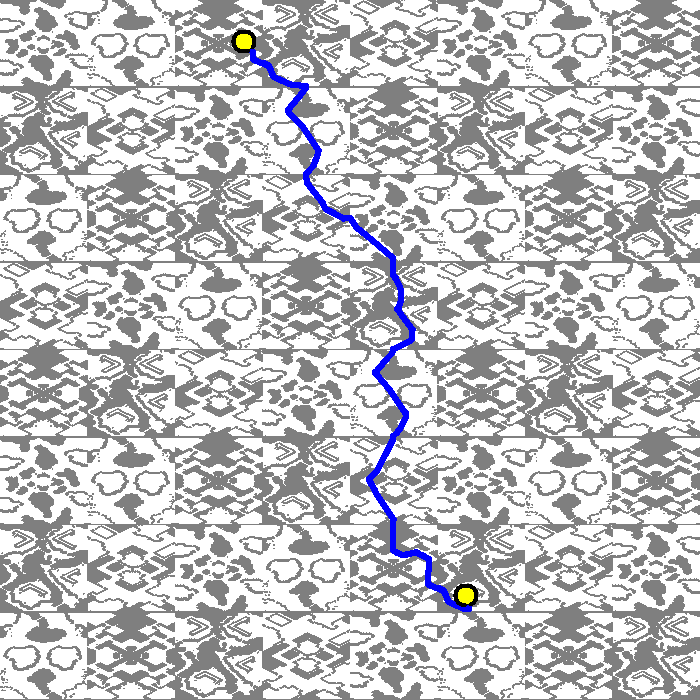}
      \caption{Tiled benchmark maps}
      \label{fig:map_tiled}
    \end{subfigure}%
  \caption{Some $4000\times 4000$ maps used in the experiments.}
  \label{fig:modified_benchmarks}
\end{figure}

In addition to the benchmarks from \cite{sturtevant2012benchmarks}, we use three methods to generate larger maps. The first two sets are random cave maps generated using cellular automata \cite{cellular_automata}, a common technique for generating game maps, with $30\%$ / $45\%$ blocked tiles respectively. The third set is upscaled versions of benchmark game maps, smoothed using cellular automata. The fourth set is generated by tiling benchmark game maps to form larger maps. Examples are shown in Figure \ref{fig:modified_benchmarks}. Map sizes used are around $2000\times 2000$, $4000\times 4000$, and $6000\times 6000$.

From Tables \ref{tab:large} and \ref{tab:benchmark}, we see that the biggest runtime saving from VGs to SVGs comes from using Line-of-Sight Scans (VG\textsubscript{S}) instead of Line-of-Sight Checks (VG\textsubscript{C}) for inserting start and goal points. SVGs improve the running time further with no additional cost. The improvement in construction time from using Line-of-Sight Scans is especially significant on random maps. SVG construction is even faster than VG\textsubscript{S} as it can use Taut-Direction Line-of-Sight Scans.

This is because the running time of Line-of-Sight Checks depends on the number of visibility graph vertices (convex corners) in the entire grid, while the running time of Line-of-Sight Scans depends on the number of visible vertices from each vertex. Random maps have more convex corners and shorter visibility ranges.

While SVGs effectively cut running time regardless of map structure, ENLSVGs speed up search by taking advantage of map structure to build a hierarchy. As such, the cost savings are small on completely random maps (Table \ref{tab:benchmark}).

We break down the ENLSVG algorithm into three components: insertion, marking and search. Insertion is the Line-of-Sight Scans to connect the start and goal to the graph. Search refers to the final A* search. Only the insertion and search components apply to the SVG algorithm.

From Table \ref{tab:large}, we see ENLSVGs perform a lot better than SVGs on tiled and generated maps, but only slightly better on upscaled maps. This is because the bottleneck on upscaled maps is the insertion and marking steps, while the bottleneck on tiled maps is the search step. Insertion time depends on how wide the open spaces are, while search time is tied to the complexity of the map. Upscaled maps have large open spaces and low complexity, while tiled maps are the opposite. If we look at search time alone however, we see that ENLSVGs consistently do much better than SVGs.

When compared to existing algorithms Theta* and VG\textsubscript{C}, especially on large maps, SVGs and ENLSVGs are orders of magnitude faster. Theta* was used as a benchmark as the running time of Theta* is well understood, as compared to algorithms like Anya. The relationship between Theta* and other algorithms can be found in \cite{comparison_paper}. ENLSVGs perform well (in real-time) even on $10000\times 10000$ grids. Memory constraints from pathfinding on large grids however prevent us from extracting reliable running time data, due to inconsistent running times when memory paging occurs.

The difference in construction time between ENLSVGs and SVGs is the time spent building the hierarchy over the SVG. As construction time is not the main consideration in this paper, the hierarchy building step is implemented using Algorithm \ref{alg:computeEdgeLevels} above, which runs in $O(m^2 \Delta)$ time on an SVG with $m$ edges and maximum vertex degree $\Delta$. We believe this can be improved to $O(m\Delta)$ with a better algorithm.

\section{Conclusions}
On maps with wider open spaces, even though the ENLSVG algorithm's running time is bottlenecked by the insertion and marking steps, the ultimate reduction in time used for the search step gives is an indication of the potential speedup that can be obtained through the use of ENLSVGs.

If we could do away with the insertion and marking steps, this speedup could be achieved. For example, pre-processing could be used to quickly find the visible neighbours of the start and goal points. Regarding the marking step, we can see that some form of goal-based initial search is needed in this algorithm, in order for the search to ``know'' that it is approaching the goal, and start following paths of decreasing edge levels. Whether a double-ended search algorithm can be used to omit the marking step is an open question.

In this paper, we also see the use of taut (locally optimal) paths as a heuristic for globally optimal paths to reduce the search space. The exact relationship between ENLSVGs (edge levels by pruning non-taut paths) and the 8-directional optimal algorithm N-Level Subgoal Graphs (vertex levels by pruning suboptimal paths), as well as a deeper analysis of their running times and pitfalls, remains to be investigated.

\section{Additional Experiments}
Two other recent Any-Angle Pathfinding algorithms have also been shown to perform significantly better than Theta* in practice. The first algorithm is Anya as described in \cite{harabor_anya_full}, a fast online optimal algorithm based on searching intervals instead of vertices. The second algorithm, in \cite{vg_quadtrees}, describes multiple optimisations to speed up A* on a Visibility Graph over a quadtree. An experimental comparison between ENLSVGs and these algorithms is in the works.

\bibliographystyle{aaai}
\bibliography{report}

\end{document}